\newtheorem{theorem}{Theorem}
\newtheorem{lemma}{Lemma}
\newtheorem{assumption}{Assumption}
\DeclareMathAlphabet{\mathcal}{OMS}{cmsy}{m}{n}
\newcommand{\noh}[1]{\ifthenelse{\boolean{}}   
{\textcolor{red}{#1}}{}}
\newcommand{\yrchoi}[1]{  \ifthenelse{\boolean{}}  
{ \textcolor{blue}{(young-ri choi:  #1)}} {}  }
\newcommand{\jschoi}[1]{  \ifthenelse{\boolean{}}  
{ \textcolor{magenta}{(jaesik choi:  #1)}} {}  }
\newcommand{\jay}[1]{  \ifthenelse{\boolean{}}   
{ \textcolor{teal}{(jay:  #1)}} {}  }
\newcommand{\cm}[1]{  \ifthenelse{\boolean{}}   
{ \textcolor{purple}{(cm:  #1)}} {}  }
\newcommand{\gc}[1]{  \ifthenelse{\boolean{}}   
{ \textcolor{brown}{(gc:  #1)}} {}  }
\newcommand{\sm}[1]{  \ifthenelse{\boolean{}}   
{ \textcolor{brown}{(sm:  #1)}} {}  }
\newcommand{\LS}{local staleness}
\newcommand{\GS}{global staleness}
\newcommand{\VW}{virtual worker}
\newcommand{\MW}{virtual workers}
\newcommand{\WSP}{Wave Synchronous Parallel}
\newcommand{\wave}{wave}
\newcommand{\LB}{s_{local}}
\newcommand{\GB}{s_{global}}
\newcommand{\LW}{w_{local}}
\newcommand{\LC}{c_{local}}
\newcommand{\GC}{c_{global}}
\newcommand{\GW}{w_{global}}
\renewcommand\sectionmark[1]{} 
\renewcommand\subsectionmark[1]{}
\renewcommand{\markboth}[2]{}
\begin{document}

\date{}

 \title{\vspace{-1.7cm} 
 \Large{\bf HetPipe: Enabling Large DNN Training on (Whimpy) Heterogeneous GPU Clusters through Integration of Pipelined Model Parallelism and Data Parallelism}}





\author[]{Jay H. Park}
\author[]{Gyeongchan Yun}
\author[]{Chang M. Yi}
\author[]{Nguyen T. Nguyen}
\author[]{Seungmin Lee}
\author[2]{\\Jaesik Choi}
\author[]{Sam H. Noh}
\author[]{Young-ri Choi\vspace{-0.1cm}}
\affil[]{\textit{UNIST} \quad $^{2}$\textit{KAIST}}


\maketitle

\begin{abstract}
Deep Neural Network (DNN) models have continuously been growing in size in order to improve the accuracy and quality of the models.
Moreover, for training of large DNN models, the use of heterogeneous GPUs is inevitable due to the short release cycle of new GPU architectures.
In this paper, we investigate how to enable training of large DNN models on a heterogeneous GPU cluster that possibly includes whimpy GPUs that, as a standalone, could not be used for training. 
We present a DNN training system, {\it HetPipe} 
({\it Het}erogeneous {\it Pipe}line), that integrates pipelined model parallelism (PMP) with data parallelism (DP).
In HetPipe, a group of multiple GPUs, called a {\it \VW}, processes minibatches in a pipelined manner, and  multiple such virtual workers employ data parallelism for higher performance.
We also propose a novel parameter synchronization model, which we refer to as Wave Synchronous Parallel (WSP) to accommodate both PMP and DP for {\MW}, and provide convergence proof of WSP.
Our experimental results on a given heterogeneous setting show that 
with HetPipe, DNN models converge up to 49\% faster compared to 
the state-of-the-art DP technique.
\end{abstract}

\vspace{-0.4cm}
\section{Introduction}
\label{sec:intro}
\vspace{-0.3cm}

We have, in our facilities, four systems each with a different set of GPUs.
Each, at the time of purchase, was (close to) state-of-the-art affordable with what budget we could muster.
With technology advancing in such rapid pace, these systems have become outdated.
Furthermore, the world is requiring us to run larger and larger Deep Neural Networks (DNNs) models.
What we have is a bunch of (now) old technologies, individually unable to run these large models and funds depleted, unable to purchase the high-priced state-of-the-art systems.
Our boss is asking what happened to those machines bought in the past years, why those cannot be used, unaware that this war is about purchasing power.
Except for those exceptionals (you know who you are), this is a typical scenario we are faced with.
%

\color{black}

Deep Neural Networks have been popularly used to solve various problems 
such as image classification~\cite{he2016deep, 2012alexnet}, speech recognition~\cite{hinton2012deep}, 
topic modeling~\cite{blei2003latent}, and text processing~\cite{collobert2008unified}.
The size of DNN models (i.e., the number of parameters) have continuously been increasing in order to improve the accuracy and quality of models and to deal with complex 
features of data~\cite{huang2018gpipe,real2019regularized,vaswani2017attention,wang2019supporting}. 
The size of input data and batches used for training have also increased to achieve higher accuracy and throughput~\cite{huang2018gpipe,jin2019split}.

For training large DNN models, data parallelism~\cite{bottou2010large,li2014scaling,sergeev2018horovod,li2018pipe}, which employs multiple workers using parameter servers or AllReduce communication,
and model parallelism~\cite{dean2012distbelief,krizhevsky2014one,lee2014model}, which divides the network layers of a DNN model into multiple partitions and assigns each partition to a different GPU, have commonly been leveraged.
Furthermore, to mitigate the critical issue of low GPU utilization of naive model parallelism, pipelined model parallelism, where minibatches are continuously fed to the GPUs one after the other and processed in a pipelined manner, has recently been proposed~\cite{harlap2018pipedream,huang2018gpipe}.

For training DNN models, the use of GPU clusters is now commonplace.
In such an environment, the use of heterogeneous GPUs is inevitable due to the short release cycle of new GPU architectures~\cite{jiang2017heterogeneity}. 
Moreover, several types of GPUs targeted for high-end servers, workstations, and desktops are being released for purchase~\cite{rtx2060,p4000,titanrtx,titanv}.
Table~\ref{tbl:heteroGPU} shows the hardware specifications for four different types of GPUs that 
we have purchased in our institution in the short span of the last three 
years, 
with each type determined by the year of purchase and the expenses available at the time.
Due to their cost-effectiveness, less expensive GPUs targeted for desktops and workstations, rather than high-end servers are also commonly used for machine learning training, especially for small and medium size clusters~\cite{dorogush2018catboost,jayarajan2019priority,saqib2017study,wang2017internal,yu2017image,zhu2018benchmarking}.
Due to the same reason, spot instances with different types of GPUs that are offered by cloud service providers are being used~\cite{ec2,azure,jiang2017heterogeneity}.



\begin{table}[t]
\renewcommand{\arraystretch}{1.2}
\scriptsize
\renewcommand{\tabcolsep}{1.0mm}
\caption{Heterogeneous GPUs}
\vspace{-0.1cm}
\begin{center}
\begin{tabular}{|*{6}{c|}}
\hline
  & Architecture & \makecell{CUDA \\ Core} & \makecell{Boost \\ Clock (MHz)} & \makecell{Memory \\ Size (GB)} & \makecell{Memory BW\\ (GB/sec)}\\ \hline \hline
TITAN V & Volta & 5120 & 1455 & 12 & 653\\ \hline
TITAN RTX & Turing & 4608 & 1770 & 24 & 672\\
\hline
\makecell{GeForce \\ RTX 2060} & Turing & 1920 & 1680 & 6 & 336\\ \hline
Quadro P4000 & Pascal & 1792 & 1480 & 8 & 243\\ \hline
\end{tabular}
\end{center}
\label{tbl:heteroGPU}
\vspace{-0.4cm}
\end{table}

There are benefits to enabling DNN training with heterogeneous resources.
First, it allows for large model training with lower-class GPUs.
While unable to train individually due to their limited resources, aggregated together, they may be used for training.
These GPUs, which likely would have been retired, 
become usable, possibly used to create (virtual) workers that show similar performance as high-class GPUs. 
Second, low-class GPUs can be used to improve the performance of even high-class GPUs by incrementally adding on the resources of the (old) lower class systems to the (new) high-class systems.
We call a group of aggregated GPUs that could satisfy the resource constraint and be used for training
a {\it \VW}. Internally, such a {\VW} could leverage pipelined model parallelism (PMP) 
to process a minibatch, while externally, a number of \MW\ could leverage data parallelism (DP) for higher performance.

\color{black}

In this paper, we explore the integration of PMP and DP to maximize the parallelism
of DNN model training. In particular, we investigate a DNN model training system, 
which employs both PMP and DP, for a heterogeneous GPU cluster that
possibly includes whimpy GPUs that, as a standalone, could not be used for training large models. 
There are numerous technical challenges that need to be overcome to realize a truly 
ideal solution of PMP and DP based DNN training systems for heterogeneous GPU clusters: 
How are the heterogeneous GPUs to be divided and allocated into a {\VW}? 
How do we reduce {\VW} stragglers when we consider DP? 
How do we partition the model to maximize the performance of PMP using
heterogeneous GPUs? How are the weights synchronized in this setting? 
That is, what version of parameters is used for a next minibatch 
while previous minibatches are still executing in a pipelined manner within each {\VW}? 
How do multiple {\VW s}
synchronize their parameters? Can we guarantee convergence?


While DP~\cite{bottou2010large,li2014scaling,sergeev2018horovod,li2018pipe}, PMP~\cite{huang2018gpipe,harlap2018pipedream}, and heterogeneity~\cite{8752958,jiang2017heterogeneity,lian2017asynchronous} for training have been considered separately, to the best of our knowledge, this is the first paper that tackles these issues together in attempting to answer {\it some} of the aforementioned questions.
In this work, 
we design a DNN training system, {\it HetPipe} ({\it Het}erogeneous {\it Pipe}line), 
that integrates PMP of a {\VW}, which is composed of multiple (possibly whimpy) 
heterogeneous GPUs, with DP of {\MW} using parameter servers to enable and also speed up 
training of large models.
HetPipe can aggregate heterogeneous resources 
from multiple GPUs 
to form a {\VW} such that 
the performance of each {\VW} is similar to each other, reducing the straggler problem.
%
For HetPipe, we propose a novel parameter synchronization model, which we refer to as Wave 
Synchronous Parallel (WSP). WSP is adapted from the Stale Synchronous 
Parallel (SSP) model~\cite{ho2013more} to accommodate both PMP and DP for {\MW} 
composed of heterogeneous GPUs. We also prove the convergence of WSP.
Note that while HetPipe would work in a homogeneous GPU cluster, with the rapid turnaround of newer GPU architectures, it is more likely that one will end up with a cluster of heterogeneous GPUs.
This is the environment that we target. 

We implement HetPipe by modifying TensorFlow, a commonly used machine learning 
training system. We evaluate the performance of HetPipe for 
two DNN models using a heterogeneous GPU cluster composed of four different
types of GPUs. Our experimental results demonstrate that the performance of 
HetPipe is better than that of 
the state-of-the-art DP via Horovod~\cite{sergeev2018horovod} that uses AllReduce communication~\cite{patarasuk2009bandwidth}.
Compared to Horovod, the convergence of VGG-19 with a large parameter set
to a desired accuracy becomes 49\% faster, and that of ResNet-152 which
is too big to be loaded in four whimpy GPUs in our cluster 
becomes 39\% faster by using all the GPUs (including whimpy ones). 

Strategies to leverage PMP have been explored in previous studies~\cite{chen2012pipelined,harlap2018pipedream,huang2018gpipe,kim2016strads}.
Compared to these, our study makes forward strides in three aspects.
First, we generalize PMP of a {\VW} to be used together with DP of {\MW}, increasing the parallelism of DNN model training.
Consequently, this results in speeding up training.
Second, we consider a heterogeneous GPU cluster, which allows the use of GPUs, which otherwise, could not be used for training.
Finally, we present a parameter synchronization model that guarantees convergence, of which we provide a proof.
We provide a more in-depth comparative discussion on these studies in Section~\ref{sec:mp}.

\vspace{-0.4cm}
\section{Background}
\label{sec:motivation}
\vspace{-0.2cm}
\vspace{-0.1cm}
\subsection{DNN Training}
\vspace{-0.2cm}

The goal of training of a DNN model composed of multiple layers is to find the parameters (or weights) $w$ of the model that minimizes the sum of a loss function for the training dataset that consists of training samples and
their labels. 
In a popularly used training method,
{\it stochastic gradient descent} (SGD), it computes the weight updates,
i.e., {\it gradients} on a subset of training samples, called a {\it minibatch}, and updates weights $w$.

The training process consists of a {\it forward pass} and then a {\it backward pass}. 
In the forward pass, the model first predicts the label for each of the samples in a minibatch. 
Each layer computes activations for the next layer using the given input data and the current parameters.
Finally, the last layer of the model computes loss based on the predicted and actual labels.
In the backward pass, the loss is backpropagated over all the layers of the model where each layer computes gradients using the gradients computed by the upper layer and activations previously computed in the forward pass.

\vspace{-0.4cm}
\subsection{Data Parallelism}
\label{sec:dp}
\vspace{-0.2cm}
Data parallelism (DP) utilizes multiple workers to speed up training of a DNN model. 
It divides the training dataset into subsets and assigns 
each worker a different subset.
Each worker has a replica of the DNN model and processes each minibatch in the subset, thereby 
computing the weight updates. 
Therefore, if a DNN model cannot be loaded into the memory of a single GPU, DP 
cannot be used.

%
Among the multiple workers, the parameters are synchronized using parameter servers~\cite{li2014scaling}
or AllReduce communications~\cite{sergeev2018horovod,li2018pipe}.
For {\it Bulk Synchronous Parallel} (BSP)~\cite{meng2016mllib,abadi2016tensorflow}, 
each worker must wait for all other workers to finish the current minibatch $p$ before it starts to
process the next minibatch $p+1$ so that it can use an updated version
of the weights for minibatch $p+1$. 
For {\it Asynchronous Parallel} (ASP)~\cite{recht2011hogwild,abadi2016tensorflow},
each worker need not wait for other workers to finish minibatch $p$, possibly using a stale 
version of the weights. 
With BSP, which is possible for both the parameter servers and AllReduce communications, the system 
may suffer from high synchronization overhead, especially in a heterogeneous GPU cluster where each worker 
with a different GPU provides different training performance~\cite{lian2017asynchronous}. On the other hand, while ASP, which is possible
for the parameter servers, has no synchronization overhead, it is known that ASP does not ensure convergence~\cite{recht2011hogwild,zhao2016fast}.

\begin{figure*}[t]
\center
\centering
\includegraphics[scale=0.5]{./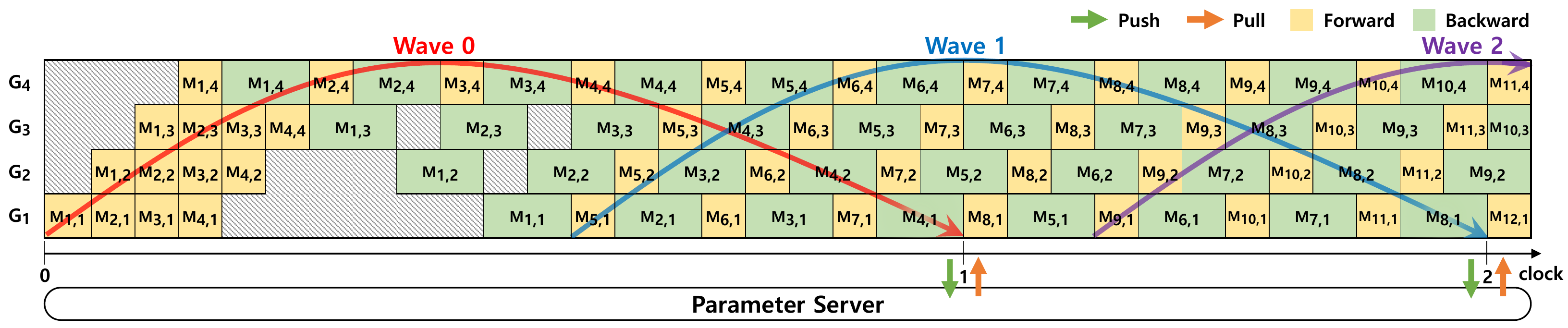}
\vspace{-0.3cm}
\caption{Pipeline execution of minibatches where $M_{p,k}$ indicates the execution of a minibatch $p$ in partition $k$, which is executed in GPU$_k$ and the yellow and green colors indicate the forward and backward passes, respectively.}
\label{fig:pipeline}
\vspace{0.1cm}
\end{figure*}

A method that takes the middle ground between BSP and ASP is {\it Stale Synchronous Parallel} (SSP)~\cite{ho2013more}.
With SSP, each worker is allowed to proceed the training of minibatches using a {\it stale} version of the weights that may not reflect the most recent updates computed by other workers.
Thus, workers need not synchronize with other workers whenever it finishes the processing of a minibatch.
As such, parameter staleness can occur.
However, this staleness is bounded as defined by the user and referred to as the \textit{staleness threshold}.
As SSP is beneficial when worker performance is varied, it has been explored especially in the context of
heterogeneous systems~\cite{jiang2017heterogeneity}.

In SSP, each worker periodically pushes the weight updates to the parameter server. 
This synchronization interval is called a {\it clock}.
Thus, each worker increases its local clock by one for every iteration, which is the training period of a minibatch.
For a given staleness threshold $s$ where $s \geq 0$, 
each worker with clock $c$ is allowed to use a stale version of the weights, which includes all the updates from iteration 0 to $c-s-1$ and, possibly, more recent updates past iteration $c-s-1$. 
That is, a worker can continue training of the next minibatch 
with parameters whose updates may be missing from up to the $s$ most recent minibatches.

\vspace{-0.4cm}
\subsection{Model Parallelism and Pipeline Execution}
\label{sec:mp}
\vspace{-0.2cm}
Model parallelism (MP) is typically exploited for large DNN models that are too large to be loaded into memory of a single GPU. 
In particular, a DNN model composed of multiple layers is divided into $k$ partitions and each partition is assigned 
to a different GPU. 
Each GPU executes both the forward and backward passes for the layers of the assigned partition. 
\textit{
Note that it is important to execute the forward and backward passes of a partition
on the same GPU} as the activation result computed for the minibatch during the forward pass 
needs to be kept in the GPU memory until the backward pass of the same minibatch for efficient convergence, as similarly discussed by Narayanan and others~\cite{harlap2018pipedream}.
Otherwise, considerable extra overhead will incur 
for managing the activation through either recomputation or memory management.

In the basic form of MP, $k$ GPUs, individually, act as one {\it \VW} to
process a minibatch as follows:
For each minibatch, execution of the forward pass starts from GPU$_1$ up to GPU$_k$.
When each GPU$_i$, where $1 \leq i < k$, completes the forward pass of the assigned partition, it sends 
the computed activations of \textit{only the last layer in its partition} to GPU$_{i+1}$. 
Once GPU$_k$ finishes the forward pass of its partition, the backward pass of the minibatch 
is executed from GPU$_k$ down to GPU$_1$.
When each GPU$_{i'}$, where $1 < i' \leq k$, finishes the backward pass, it sends the computed local gradient 
of \textit{only the first layer in its assigned partition} to GPU$_{i'-1}$.
This basic form of MP results in low GPU utilization as only one GPU is actively executing either the forward or backward pass.
Nonetheless, MP allows execution of large DNN models that are too large for a single GPU.


To improve utilization of the GPUs in a {\VW}, minibatches can be processed in a pipelined manner.
The subsequent minibatches are fed into the first GPU in MP (i.e., GPU$_1$) one by one once the GPU completes 
the processing of the previous minibatch. 
This allows for multiple GPUs to simultaneously execute either the forward or backward pass of their assigned 
layers for different minibatches.
This is referred to as Pipelined Model Parallelism (PMP).

\begin{table}[t]
\renewcommand{\arraystretch}{1.2}
\scriptsize
\renewcommand{\tabcolsep}{1.0mm}
\vspace{-0.2cm}
\caption{Comparison of HetPipe with GPipe and PipeDream}
\vspace{-0.1cm}
\begin{center}
\begin{tabular}{|c|c|c|c|}
\hline
 & Gpipe & PipeDream & HetPipe\\ \hline \hline
Heterogeneous Cluster Support & No & No & Yes\\ \hline
Target Large Model Training & Yes & No & Yes\\ \hline
Number of (Virtual) Workers & 1 & 1 & n\\ \hline
Data Parallelism & Extensible & Partition & Virtual Workers\\ \hline
Proof of Convergence & Analytical & Empirical & Analytical\\ \hline
\end{tabular}
\end{center}
\label{tbl:compto}
\vspace{-0.3cm}
\end{table}

This PMP strategy has been investigated in previous studies~\cite{harlap2018pipedream,huang2018gpipe}.
PipeDream exploits PMP of a single {\VW} to avoid the parameter communication 
overhead of DP~\cite{harlap2018pipedream}.
Considering only homogeneous GPUs, when PipeDream partitions a model into stages 
to maximize pipeline performance, it does not 
take into account the memory requirement of each stage. Thus, PipeDream 
processes a limited number of minibatches, which is large enough to
saturate the pipeline, to reduce memory overhead. 
PipeDream also provides a form of DP, but it considers 
DP within a {\VW} to speed up the execution of lagging layers.
No proof of single pipeline convergence is provided in PipeDream.

GPipe is a scheme that leverages PMP of a single {\VW}
to support large DNN models, also in a homogeneous GPU cluster~\cite{huang2018gpipe}.
In GPipe, a minibatch is divided into multiple {\it microbatches} that are injected 
into the pipeline. Using the same weights, GPipe executes the forward passes for all the microbatches, and then executes the backward passes for them. 
When the backward pass of the last microbatch is done, it updates the weights all together 
for the minibatch. 
GPipe incurs frequent pipeline flushes, possibly resulting in low GPU utilization~\cite{harlap2018pipedream}.
In GPipe, DP of multiple {\MW} can be done using existing synchronization schemes 
like BSP as a {\VW} processes one minibatch at a time.
GPipe saves on GPU memory by recomputing the activations again in the backward pass 
instead of keeping the activations computed in the forward pass in memory. 
We do not use this optimization though there are no fundamental reasons forbidding it.
A comparison of HetPipe with previous studies is given 
in Table~\ref{tbl:compto}.
\color{black}

\vspace{-0.5cm}
\section{System Overview}
\label{sec:overview}
\vspace{-0.3cm}


%
The system that we propose focuses on training a large DNN model 
in a \textit{heterogeneous GPU cluster} composed of various 
types of GPUs that have different computation capability and memory capacity. 
Except for the exceptional few who always have the luxury to be provided with the most advanced systems, most DNN practitioners will inevitably find themselves in this type of environment as systems evolve and the demand for larger DNN models continues.
In such settings, for some types of GPUs in the cluster, the DNN model of interest may be too large to be loaded into the memory of a single GPU.
The system that we propose in this paper leverages both pipelined model parallelism (PMP) and data parallelism (DP) to enable training of such large DNN models and, in the process, enhance performance as well as the utilization of the heterogeneous GPU resources of the cluster.


Figure~\ref{fig:arch} shows the architecture of the proposed cluster system composed of $H$ nodes.
Each node comprises a homogeneous set of GPUs, but the GPUs (and memory capacity) of the nodes themselves can be heterogeneous.
Two key novelties exist in this architecture.
First, DP is supported through a notion of a {\it \VW} (VW), which consists of $k$, possibly heterogeneous, GPUs, and encapsulates the notion of a worker in typical DNN systems.
That is, a {\VW} is used to train the DNN model. 
In Figure~\ref{fig:arch}, note that there are $N$ {\MW} with 4 GPUs each, that is, $k= 4$, and that the GPUs comprising the {\VW} may be different for each \VW. 
While in this paper we consider $k$ to be constant for each {\VW}, our design does not restrain it to be so; this is simply a choice we make for simplicity.
The key aspect here is that a {\VW} allows DP by aggregating GPUs possibly even when individual GPUs may be resource limited.

\begin{figure}[t]
\centering
\includegraphics[scale=0.3]{./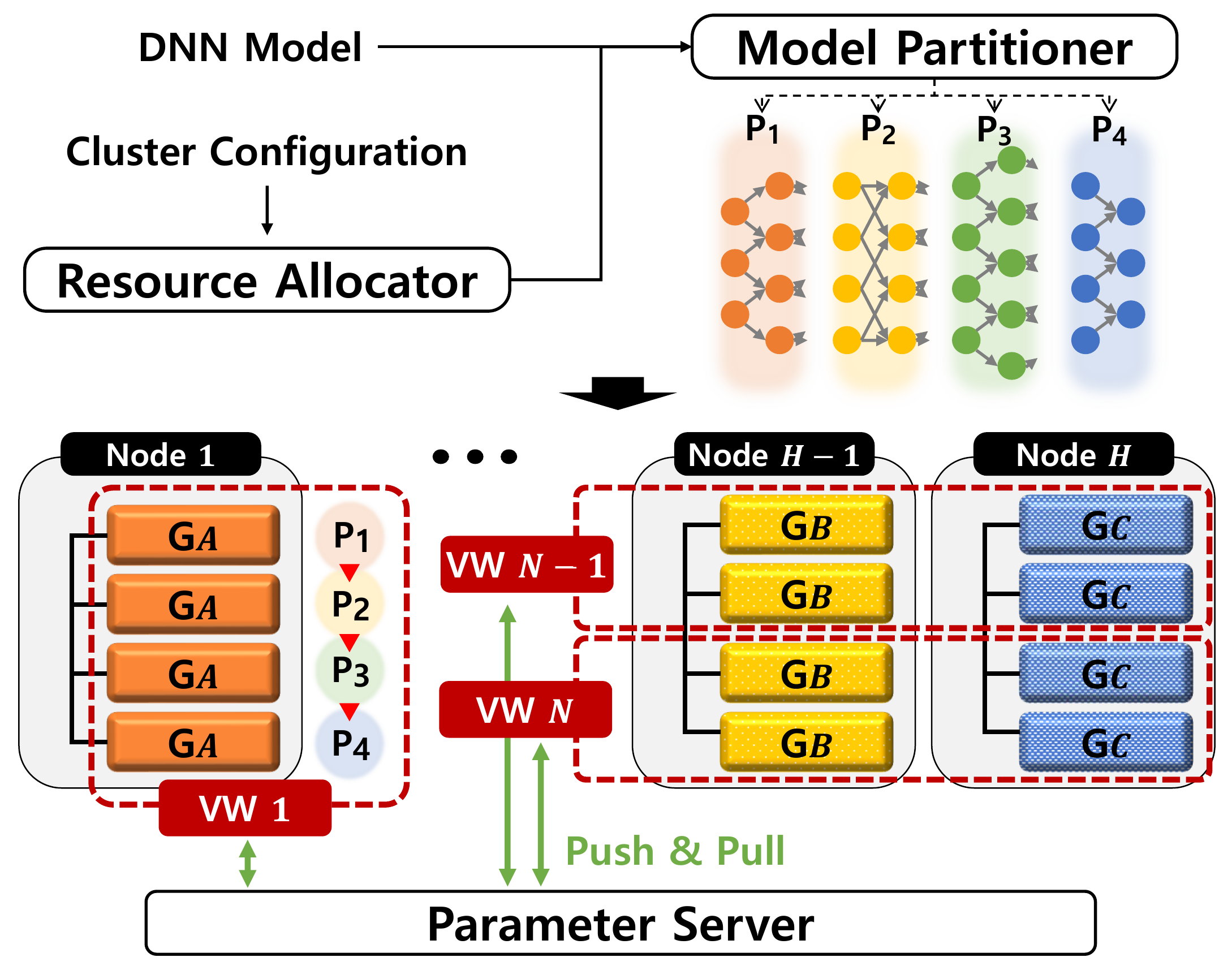}
\vspace{-0.2cm}
\caption{System architecture (VW: Virtual Worker)}
\vspace{0.1cm}
\label{fig:arch}
\end{figure}

The second novelty is that each {\VW} processes each minibatch based on model parallelism, in a pipelined manner, to fully utilize the GPU resources, as shown in Figure~\ref{fig:pipeline}, to accommodate large DNN models. 
While PMP has been proposed before 
(which we compare in Section~\ref{sec:mp}),
to the best of our knowledge, we are the first to present PMP in a heterogeneous setting.
We refer to our system as {\it HetPipe} as it is {\it het}erogeneous, in GPUs, 
across and, possibly, within {\MW} and makes use of {\it pipe}lining in {\MW} 
for resource efficiency. 

To train DNN models based on pipelined model parallelism in {\VW s}, the {\it resource allocator} first assigns $k$ GPUs
to each {\VW} based on a resource allocation policy (which will be discussed in Section~\ref{sec:methodology}). 
Note that for allocating the heterogeneous
GPUs to the {\VW s}, the resource allocation policy must consider several 
factors such as the performance of individual GPUs as well as the communication overhead caused by sending activations and gradients 
within a {\VW}, and synchronizing the weights among the {\VW}s and the parameter server. 
Then, for the given DNN model and allocated $k$ GPUs, the {\it model partitioner} divides the model into $k$ partitions for the {\VW} such that the performance of the pipeline executed in the {\VW} can be maximized.

As any typical DP, multiple {\VW}s must periodically synchronize the global parameters via
parameter servers or AllReduce communication; in HetPipe, parameter servers are used
to maintain the global weights. Each {\VW} has a local copy of the global weights
and periodically synchronizes the weights with the parameter server.
\color{black}
Evidently, when managing the weights within a {\VW} and across {\MW}, two types of staleness, {\it \LS} and {\it \GS}, need to be permitted to improve the performance of DNN training.
Local staleness refers to staleness within a \VW.
As each {\VW} processes minibatches in a pipelined manner, there are multiple minibatches that are being processed in parallel.
Thus, staleness is inevitable as weights seen by a minibatch 
may not reflect the updates of all of its previous minibatches.

Global staleness, on the other hand, is similar to the staleness notion introduced 
by Ho et al.~\cite{ho2013more}. 
That is, the system needs to reduce communication overhead between the parameter server 
and (virtual) workers, and, in our case, also 
mitigate the synchronization overhead caused 
by possibly heterogeneous {\VW s}.
Therefore, similarly to SSP~\cite{ho2013more}, each {\VW} should be allowed to proceed training without querying the global weights for every minibatch, unless its local copy is so old such that there are too many missing recent updates made by other {\VW s}.
Note that such staleness condition is set by the user~\cite{ho2013more}.

For our system, we propose the {\it \WSP} (WSP) model to synchronize the weights. 
A {\it \wave} is a sequence of minibatches that are processed concurrently in a {\VW}.
Let the number of minibatches in a wave be $N_m$. 
Within a wave, processing of the $i$-th minibatch is allowed to proceed without waiting for the preceding minibatchs $i'$ to be completed, where $1 < i \leq N_m$ and $1 \leq i' < i$.
As the {\VW} does not enforce the updates even from the first minibatch in a wave to be reflected in the weights used by the last minibatch, the {\LS} threshold in WSP is $N_m - 1$. 
Moreover, {\it each {\VW} only pushes the aggregated updates from all the minibatches in a wave, instead of for every minibatch, to the parameter server.} 
This results in considerable reduction in communication overhead.


As it is important that the results generated through our proposed system configuration are correct~\cite{ho2013more,jiang2017heterogeneity,zinkevich2009slow}, we show the convergence of our methodology in Section~\ref{sec:anal}.

Note that HetPipe uses parameter servers, which may incur 
synchronization and communication overhead. 
However, HetPipe mitigates such overhead by permitting global staleness among {\MW} and executing the pipeline in each {\VW} such that it continues to process minibatches that have already been injected while waiting for the parameter update.
We believe HetPipe can be further optimized by taking decentralized approaches, but leave this for future work.

\vspace{-0.4cm}
\section{Pipelined Model Parallelism Within a VW}
\label{sec:local}
\vspace{-0.3cm}



\textbf{Number of Minibatches in the Pipeline:}
In our system, each {\VW} processes up to $N_m$ minibatches concurrently in a pipeline manner so that the executions of the minibatches can overlap. 
Given a DNN model and $k$ GPUs, the maximum number of minibatches executed concurrently in the {\VW}, $Max_m$, is basically 
determined by the memory requirement for training the model. 
For a model that requires a huge amount of memory for output activations and weights, 
$Max_m$ may be less than $k$. 
Note that in such cases, the utilization of 
each GPU is unlikely to be high.

%
$N_m$, the actual number of minibatches in the pipeline will be $N_m \le Max_m$ and 
basically determined by considering the throughput of the pipeline. 
Note that $N_m$ must be the same in every {\VW}, and thus, $N_m$ is set 
to the minimum $Max_m$ among all the {\MW}.
$N_m$ will affect the local staleness that we discuss later in this section.

\textbf{Model Partitioning:}
To train a DNN model, a set of $k$ GPUs is allocated to a {\VW} by a resource allocation policy, which we discuss in Section~\ref{sec:methodology}.
For now, let us assume that $k$, the number of possibly heterogeneous GPUs, and $N_m$ are given.
Then, a partitioning algorithm is employed to divide multiple layers of the model into $k$ partitions, assigning them to the $k$ different GPUs. 
The goal of the partitioning algorithm is to maximize the performance of the pipeline, while satisfying the memory requirement of each partition to process $N_m$ minibatches.

In particular, in this study, for memory, we consider the fact that the actual memory requirement will vary depending on the stage of the pipeline that the GPU is used for.
For example, contrast GPU$_4$ and GPU$_1$ in Figure~\ref{fig:pipeline}.
GPU$_4$, the GPU that handles the last stage of the pipeline, handles only one minibatch at a time and is immediately done with the minibatch as exemplified by the yellow (forward pass) and green (backward pass) M$_{i,4}$ pairs for $i = 1, 2, ...$, that are side-by-side.
In contrast, for GPU$_1$, the yellow and green M$_{i,1}$ pairs are far apart, meaning that the forward pass M$_{i,1}$ needs to hold up memory until the backward pass M$_{i,1}$ is finished with its execution.
Thus, with GPU$_1$, the memory requirement is high as it needs to hold on to the results of the forward pass for all stages of the pipeline.
This variance in memory requirement is considered in partitioning the layers.

Execution time must also be considered when partitioning the layers.
To do so, we calculate the execution time of a partition to be the sum of the computation time of all the layers in the partition and the communication time needed for receiving the activations (in the forward pass) and local gradients (in the backward pass). 
Our partitioning algorithm attempts to minimize the maximum execution time of the partitions within the bounds of satisfying the memory requirement.

\textbf{Partition Scheduling:}
Once the partition is set, the partitions need to be scheduled for each of the GPUs.
Each GPU$_q$ responsible for partition $q$
may have multiple forward pass and backward pass tasks to schedule at a time.
Each GPU schedules a task by enforcing the following conditions:
\begin{enumerate}
\vspace{-0.2cm}
    \item A forward pass task for a minibatch $p$ will be executed only after a forward pass task for every minibatch $p'$ is done where $1 \leq p' < p$.
\vspace{-0.2cm}
    \item Similarly, a backward pass task for a minibatch $p$ will be executed only after a backward pass task for every minibatch $p'$ is done where $1 \leq p' < p$.
\vspace{-0.2cm}
    \item Among multiple forward and backward pass tasks, a FIFO scheduling policy is used.
\vspace{-0.2cm}
\end{enumerate}
Note that in the last partition, for a minibatch, processing a
forward pass immediately followed by a backward pass is executed as a single task.

\textbf{Considering Staleness:}
Given the description of pipelining, the question of staleness of weights used needs to be considered.
That is, as a minibatch is scheduled, it may be that the layers are not using the most up-to-date weights.
For example, in Figure~\ref{fig:pipeline}, when the forward pass M$_{2,1}$, the second minibatch, begins to be processed, it must use stale weights as the first minibatch has not completed and hence, the changes in the weights due to the first minibatch have not yet been appropriately reflected, which is in contrast with typical processing where minibatches are processed one at a time.
We now discuss how this staleness issue is considered.


Let {\it \LS} be the maximum number of missing updates from the most recent minibatches that is allowed for a minibatch to proceed in a {\VW}.
As training with $N_m$ minibatches can proceed in parallel in a {\VW}, the {\LS} threshold, $\LB$, is determined as $N_m -1$, 
where $1 \leq N_m \leq Max_m$.
If $N_m =1$, the behavior is exactly the same as naive model parallelism. 
Larger $N_m$ improves the performance (i.e., throughput) of the pipeline as a larger number of concurrent minibatches are executed, but local staleness increases, possibly affecting the convergence of training. 
In a real setting, typically, $N_m$ will not be large enough to affect convergence 
as it will be bounded by the total amount of GPU memory of a {\VW}.

%
Such local staleness also exists in PipeDream~\cite{harlap2018pipedream}. 
As PipeDream basically employs weight stashing that uses the latest version 
of weights available on each partition to execute the forward pass of a minibatch,
a different version of weights is used across partitions for the same minibatch.
%
Unfortunately, PipeDream only shows empirical evidence of convergence when weight stashing
is used. 
Note that PipeDream also discusses vertical sync, which is similar to HetPipe, 
but it excludes vertical sync in its evaluations~\cite{harlap2018pipedream}.

Now let $w_p$ be the weights used by minibatch $p$. 
Then, initially, we can assume that $w_0$, the initial version of weights, is given to the {\VW}. 
Then, the first ($\LB +1$) minibatches are processed in a pipelined manner with
$w_0 = w_1 = \cdots = w_{\LB} = w_{\LB+1}$. 

To accommodate staleness in our system, when processing of minibatch $p$ completes, the {\VW} updates
the local version of the weights, $\LW$ as $\LW = \LW + u_p$,
where $u_p$ is the updates computed by processing minibatch $p$.
When the {\VW} starts to process a new minibatch, it makes use of the lastest value of $\LW$ without waiting for the other minibatches to update their weights.
For example, once the {\VW} is done for minibatch 1 and updates $\LW$ with $u_1$, it will start to process minibatch $\LB+2$ by using the updated weights without waiting for minibatches 2 up to $\LB+1$ to be completed. 
Similarly, when the {\VW} is done with minibatch $\LB+1$ and updates $\LW$ with $u_{\LB+1}$, it will start to process 
minibatch $2 \times (\LB+1)$
without waiting for the previous most recent $\LB$ minibatches to be completed.
Therefore, except for the initial minibatches 1 to $\LB+1$, for minibatch $p$ the {\VW} will use the version of the weights that reflects (at least) all the local
updates from minibatches 1 to $p-(\LB+1)$.
Note that for every minibatch $p$, $w_p$ must be kept in GPU memory until the backward pass for $p$ is executed.

Note that staleness in SSP is caused by the different processing speed of minibatches among multiple workers.
Thus, in SSP, staleness is used as a means to reduce 
the synchronization and communication overhead. 
However, local staleness in HetPipe is caused inherently as minibatches are processed in a pipelined manner within a {\VW}.

\vspace{-0.5cm}
\section{Data Parallelism with Multiple VWs}
\label{sec:global}
\vspace{-0.3cm}

In this section, we discuss data parallelism (DP) with {\VW}s.
The first and foremost observation of DP being supported with {\VW}s is that the {\VW}s may be composed of (whimpy) heterogeneous GPUs.
While it is well known that DP helps expedite DNN execution, DP, in typical systems, is not possible if individual GPUs, that is, workers, do not have sufficient resources to handle the DNN model, in particular, large DNNs.
By allowing a {\VW} to be composed of multiple GPUs that are lacking in resources, our system allows DP even with whimpy GPUs.
The other key observation in properly supporting DP with {\VW}s is that each {\VW} now retains local staleness as discussed in Section~\ref{sec:local}.
Making sure that, despite such individual staleness, we  understand and show that the results obtained from DP among {\VW}s (globally) converges is an important issue that must be addressed.
The rest of the section elaborates on this matter.

\textbf{Workings of WSP:}
As stated in the system overview, HetPipe uses parameter servers.
We assume that such synchronization occurs in {\it clock} units, a notion taken from SSP~\cite{ho2013more}.
Precisely, a clock unit is defined as the progress of completing one wave.
Recall from Section~\ref{sec:overview} (and Figure~\ref{fig:pipeline}) that a wave is a sequence of $\LB+1$ minibatches concurrently executed such that a {\VW} is allowed to process a later minibatch in a wave without updates from an earlier minibatch in the same wave.

Similarly to SSP, each {\VW} maintains a local clock $\LC$, while the parameter server 
maintains a global clock $\GC$, which holds the minimum $\LC$ value of all the {\MW}.
Initially, the local clocks and the global clock are 0. 
At the end of every clock $c$, each {\VW}
completes the execution of all the minibatches in wave $c$.
At this point, the {\VW} computes the aggregated updates from minibatch $c \times (\LB+1)+1$ to minibatch $(c+1) \times (\LB+1)$ and pushes the updates \~{u} to the parameter server.
We see that, similar to in SSP~\cite{ho2013more}, \~{u} is synchronized with a clock value $c$.
For example, as shown in Figure~\ref{fig:pipeline} where $\LB=3$,
at the end of clock 0, the {\VW} pushes the aggregated updates of wave 0, and at the end of clock 1, the aggregated updates of wave 1, which is composed of minibatches from 5 to 8, and so on.
It is important to note that in WSP, the {\VW} pushes \~{u} to the parameter server for every wave, instead of pushing \~{u} for every minibatch, which will significantly reduce the communication overhead.

When the parameter server receives the updates \~{u} from the {\VW}, the parameter server updates the global version of the weights as 
\( \GW = \GW +\) \~{u}.
Note that the parameter server updates its $\GC$ to $c+1$ only after every {\VW} has pushed the aggregated updates of wave $c$.

In WSP, each {\VW} is allowed to proceed training without retrieving the global weights for every wave. 
Thus, the {\VW} may use a weight version that, from a global standpoint, may be stale, as the most recent updates received by the parameter servers may not be reflected in its local version
of the weights. 
We discuss how global staleness among the {\VW s} is bounded.


\textbf{Global Staleness Bound:}
Let {\it clock distance} be the maximum difference between the values of $\LC$ of any two {\VW}s in the system. 
That is, the clock distance is the difference in 
$\LC$ between the fastest and slowest {\VW}s.
In WSP, the maximum clock distance must be at most $D$, where $D$ is a threshold set by the user.
Therefore, a {\VW} with local clock $c$, where $c \geq D+1$,
must
use a version of the weights that includes all the (aggregated) updates from wave 0 up to $c - D - 1$ or beyond.
That is, using weights that exclude any updates from waves between 0 up to $c - D - 1$ 
is not permitted.
Thus, a {\VW} can proceed training of the next minibatch without updates from up to $D$ most recent waves.
When a {\VW} pulls the global weights at the end of clock $c$ to maintain this distance, it may need to wait for other {\VW s} to push their updates upon completion of wave $c-D$.
Note, however, that while a {\VW} waits for other {\VW s} to possibly catch up at the end of clock $c$, 
local processing is allowed to proceed with $\LB$ minibatches of wave $c+1$ as the minibatches are executed in a pipelined manner.
Take, for example, the case when $D=0$ in Figure~\ref{fig:pipeline}.
As the {\VW} completes minibatch 4, it computes the aggregated updates \~{u} for wave 0 (composed of minibatches 1 to 4)
and pushes \~{u} to the parameter server. This {\VW} now waits for the other {\VW s} to complete wave $0$ before proceeding with minibatch 8. 
However, note that as shown in the figure, 
this {\VW} has already started to process minibatches 5, 6 and 7, which belong to wave 1. 
Similarly, once it completes minibatch 8, it pushes the aggregated updates \~{u} for wave 1 (composed of minibatches 5 to 8) to the parameter server; in the meantime, it has already started processing minibatches 9, 10, and 11, which belong to wave 2.

Note that this processing of local minibatches in the {\VW} does not violate the {\LS} bound.
Note also that when $D=0$, each {\VW} must wait for each other at the end of every clock to synchronize the weights for every wave, which is BSP-like behavior with pipelined execution in each {\VW}.

Now let us define the {\it {\GS} bound}, $\GB$, to be the maximum number of missing updates from the most recent minibatches, {\it globally computed by all the other {\VW s} in the system}, that is allowed for a minibatch to proceed in a virtual worker.
We want to identify $\GB$ based on our discussion so far.
This will allow each {\VW} to determine whether it can proceed with its current minibatch.

Initially, all {\VW}s start processing the first $(D+1)$ waves without querying the global weights from the parameter server.
Furthermore, they can start to process up to $\LB$ minibatches of the next wave before receiving the global weights that include the recent updates as discussed above.
Therefore, for those initial minibatches, the {\VW} uses $w_0$ or a weight version that may include some recent local updates.

For any minibatch $p$ thereafter, that is, 
where $p > (D+1) \times (\LB+1) + \LB$, $p$ must use a weight version that reflects, at the very least, all the global updates from all the other {\VW s} from minibatch 1 to minibatch $p - (\GB+1)$, where ${\GB} = (D+1) \times (\LB+1)+\LB-1$. 
The first term of this equation is due to the fact that a {\VW} is allowed to proceed with the next $(D+1)$ waves (i.e., $(D+1) \times (\LB+1)$ minibatches), and the second term is due to the additional $\LB$ minibatches that can be started because of pipelined execution.
Continuing with the example above, where $D =0$ and $\LB = 3$, the {\VW} proceeds the training of minibatch 11
without the global and/or local updates from wave 1 (minibatches 5 to 8) or the two local updates from minibatches 9 and 10. However, it must have a version of the weights that includes all the global updates from minibatches 1 to 4.

\color{black}

\vspace{-0.3cm}
\section{Convergence Analysis}
\label{sec:anal} 
\vspace{-0.3cm}

In this section, we discuss the convergence property of the WSP model. Let $N$ be the number of virtual workers and $u_{n,p}$ be the update of worker $n$ at minibatch execution $p$. Given $s_g = s_{global}$, $s_l = s_{local} + 1$ for abbreviations and following the analysis of~\cite{ho2013more}, the {\it noisy} weight parameter $\tilde{w}_{n,p}$ (which
is a defined term for a way of updating the weights in our proof),
for worker $n$ at minibatch execution $p$, is decomposed into
\begin{multline}
    \tilde{w}_{n,p} = w_0 + \left[\sum_{n' = 1}^N \sum_{p' = 1}^{p - s_g - 1} u_{n',p'}\right] + \left[\sum_{p' \in \mathcal{C}_{n,p}} u_{n,p'}\right] \\ + \left[\sum_{\left(n',p'\right) \in\mathcal{E}_{n,p}} u_{n',p'}\right].
\end{multline}
Here $w_0$ refers to the initial parameter.
The noisy weight
has three terms which respectively include 
\begin{itemize}
    \item[1.] updates of all workers (guaranteed to be included) to process minibatch execution $p$,
    \item[2.] $\mathcal{C}_{n,p} \subseteq [p - s_g, p - 1]$: the index set of latest updates of the querying worker $n$ in the range of current global staleness bound, and
    \item[3.] $\mathcal{E}_{n,p} \subseteq \left([1,N]\backslash \{n\}\right) \times \left[p - s_g, p + s_g + s_l\right]$: the index set of extra updates of other workers in the range of current global staleness bound. When the execution $p$ is not at synchronization point, $\mathcal{E}_{n,p} = \emptyset$.
\end{itemize}
We define $\{u_t\}$ as the sequence of updates of each virtual worker after processing each minibatch and $w_t = w_0 + \sum_{t' = 0}^{t - s_l N} u_{t'}$ as the reference sequence of weights, where
\begin{equation}
    u_t \coloneqq u_{t \bmod{N}, \lfloor t/N \rfloor + \, t \bmod{s_l}},
\end{equation}
in which we loop over the workers ($t \bmod{N}$) and over each update after a minibatch execution inside a worker ($\lfloor t/N \rfloor + \, t \bmod{s_l}$). 
(Here $s_l N$ ($= s_l \times N$) is the number of total minibatch updates in one wave from all virtual workers.)
Since a virtual worker uses a version of the weights that reflects all the local updates from minibatch $1$ to $p - s_l$ for worker $p$, the reference and noisy sequences at iteration $t$ are updated up to $t - s_lN$. The set $\mathcal{E}_t$ and the noisy sequence $\tilde{w}_t$ are defined similarly and the difference between $w_t$ and $\tilde{w}_t$ is
\begin{equation}
    \tilde{w}_t = w_t - \left[\sum_{i \in \mathcal{R}_t} u_i\right] + \left[\sum_{i \in \mathcal{Q}_t} u_i\right]
\end{equation}
where $\mathcal{R}_t$ is the index set of missing updates in the reference weights but not in noisy weights, and $\mathcal{Q}_t$ is the index set of extra updates in the noisy weights but not in reference weights.

After $T$ updates, When we represent the target function as  
$f(w) \coloneqq \frac{1}{T} \sum_{t = 1}^T f_t(w)$, 
the regret of two functions with $\Tilde{w}_t$, the parameter learned from the noisy update, and $w^*$, the parameter learned from the synchronized update is   
$
    R[W] \coloneqq \frac{1}{T}\sum_{t = 1}^T     f_t\left(\Tilde{w}_t\right) - f\left(w^*\right).
$

Thus, when we bound the regret of the two functions, we can bound the error of the noisy updates incurred by the distributed pipeline staleness gradient descent. We first bound the cardinality of $\mathcal{R}_t$ and $\mathcal{Q}_t$ in the following lemma.
\begin{lemma}
\label{lem:cardinality}
The following two inequalities, 
$\vert \mathcal{R}_t \vert + \vert \mathcal{Q}_t \vert \leq (2s_g + s_l)(N - 1)$
and $\min \left(\mathcal{R}_t \cup \mathcal{Q}_t\right) \geq \max (1, t - (s_g + s_l)N)$, hold.
\end{lemma}
\begin{proof}
    Since $\mathcal{Q}_t \subseteq \mathcal{E}_t$ and $\mathcal{R}_t \subseteq \mathcal{E}_t \backslash \mathcal{Q}_t$, $        \vert \mathcal{R}_t \vert + \vert \mathcal{Q}_t \vert \leq \vert \mathcal{E}_t \vert \leq (2s_g + s_l)(N - 1)$.
    The second claim follows from $\mathcal{E}_t \supseteq \mathcal{R}_t \cup \mathcal{Q}_t$. \qedhere
\end{proof}
To prove the convergence, we have the following two assumptions and leave the proof to the Appendix~\ref{appendix:a}, which generally follows Qirong et al.~\cite{ho2013more}.
\begin{assumption} \textbf{($L$-Lipschitz components)}
For all $t$, the component function $f_t$ is convex and has bounded subdifferential $\left\Vert \nabla f_t(w) \right\Vert \leq L$, in which $L > 0$ is a constant.
\end{assumption}


\begin{assumption} \textbf{(Bounded distances)}
For all $w, w'$, the distance between them is bounded $D(w \Vert w') \leq M$, in which $M > 0$ is a constant.
\end{assumption}
We also denote $\frac{1}{2} \Vert w - w' \Vert^2$ as $D\left(w\Vert w'\right)$. Then, we can bound the regret of the function trained with our noisy distributed, pipeline update as in Theorem~\ref{thm:regret}. 
\begin{theorem} \label{thm:regret} Suppose $w^*$ is the minimizer of $f(w)$. Let $\displaystyle u_t \coloneqq - \eta_t \nabla f_t\left(\Tilde{w}_t\right)$ where $\eta_t = \frac{\sigma}{\sqrt{t}}$ with $\sigma = \frac{M}{L\sqrt{(2s_g + s_l)N}}$, in which $M, L$ are the constants defined in the assumptions. Then the regret is bounded as
$
    R[W] \leq 4ML\sqrt{\frac{(2s_g + s_l)N}{T}}. \notag
$    
\end{theorem}
Our theoretical results are similar with existing work on non pipelined version of staleness update \cite{ho2013more, jiang2017heterogeneity}. 
However, we reflect the new characteristics of distributed pipeline staleness update in Lemma~\ref{lem:cardinality}, and thus in Theorem~\ref{thm:regret}. 
\label{sec:alloc-partition}
\vspace{-0.4cm}
\section{Partitioning Algorithm}
\vspace{-0.3cm}

Recall that the goal of our partitioning algorithm is to minimize the maximum execution time of the partitions within the bounds of satisfying the memory requirement.
To obtain a performance model to predict the execution time of each layer of a model in a heterogeneous GPU, we first profile the DNN model on each of the different types of GPUs in a cluster, where we measure the computation time of each layer of the model.
For GPU memory usage, we measure the usage of each layer (by using the logging feature of TensorFlow) on only one GPU type (as it is roughly the same for all GPU types).
To compute the memory requirement for a given partition, we take into account the total memory usage to store the data to process the layers as well as the maximum number of minibatches concurrently assigned to the partition.

\color{black}

For communication time between layers in the model, we first derive the amount of input data for each layer in the forward and backward pass from the model graph. 
For the given data size, we predict  intra-node communication based on the PCI-e bandwidth, then multiply it by a scaling-down constant (which is similarly done in Paleo~\cite{qi2016paleo}), 
since in practice, it is not possible to utilize the peak bandwidth.
The scaling-down constant is derived by running a synthetic model that sends various sizes of data from one GPU
to another GPU in the same node.
For inter-node communication (via Infiniband), we use linear regression to estimate the communication time for the given data size. 
To build a prediction model, we collect 27 samples by training two DNN models, used in our experiments, with arbitrary partitions.
%
Note that in this work, the heterogeneity of network performance such as slow network links is 
not considered (as in~\cite{lian2017asynchronous}). However, for such cases, we can extend 
our partitioning algorithm to consider different network performance between two nodes when 
estimating the communication time. Also, a model that estimates the memory requirement 
for each stage more accurately will be helpful in partitioning a DNN model in a more balanced manner.

To find the best partitions of a DNN model, we make use of CPLEX, which is an optimizer for solving 
linear programming problems~\cite{cplex}.
Memory requirements for each partition on the pipeline to support $N_m$ concurrent minibatches are provided as constraints to the optimizer.

\vspace{-0.3cm}
\section{Experimental Results}
\label{sec:result}
\vspace{-0.3cm}
\subsection{Methodology}
\label{sec:methodology}
\vspace{-0.2cm}

\textbf{Heterogeneous GPU cluster:}
In our experiments, we use four nodes with two Intel Xeon Octa-core E5-2620 v4 processors (2.10 GHz) connected via Inifiniband (56 Gbps). 
Each node has 64~GB memory and 4~homogeneous GPUs.
Each node is configured with a different type of GPU as shown in Table~\ref{tbl:heteroGPU}.
Thus, the total number of GPUs in our cluster is 16.
Each GPU is equipped with PCIe-3$\times$16 (15.75 GB/s).
Ubuntu 16.04 LTS with Linux kernel version 4.4 is used.
We implement HetPipe based on the WSP model by modifying TensorFlow 1.12 version with CUDA 10.0 and cuDNN 7.4.

\noindent\textbf{DNN models and datasets}
Our main performance metric is throughput (images/second) of training a DNN model.
We use 
ResNet-152~\cite{he2016deep}, and VGG-19~\cite{simonyan2014very}
with ImageNet~\cite{deng2009imagenet}.
For each DNN model, batch size of 32 is used.

\noindent\textbf{Resource allocation for {\VW s}:}
Given any heterogeneous GPU cluster, there can be many ways of allocating the resources to the multiple {\VW}s.
For our experiments, we consider allocation policies within the bounds of our platform.
Thus, given the 16 GPUs, HetPipe employs four {\VW s}, each of which is configured with four GPUs, along the following three allocation policies.

\begin{itemize}
\vspace{-0.2cm}

\item \textbf{Node Partition (NP):} 
This policy assigns a node per {\VW}.
Thus, each \VW\ is composed of homogeneous GPUs.
Consequently, as the nodes are heterogeneous, partitioning of layers for a DNN model is different for each {\VW}.
NP results in minimum communication overhead within each {\VW} as communication between GPUs 
occurs within the same node via PCI-e, rather than across multiple nodes 
where communication is via Infiniband.
On the other hand, as the performance of each {\VW} varies, a straggler may degrade 
performance with DP.
\vspace{-0.2cm}
\item \textbf{Equal Distribution (ED):} 
This policy evenly distributes GPUs from each node to every {\VW}.
Thus, every {\VW} is assigned four different GPUs, but every \VW\ has the exact same resources.
Thus, model partitioning is the same, and thus, performance will be the same across the {\MW}, which mitigates the straggler problem.
However, ED results in high communication overhead within each {\VW}. 
\vspace{-0.2cm}
\item \textbf{Hybrid Distribution (HD):} 
This policy is a hybrid of NP and ED. 
For our cluster, a combination of two GPU types are allocated to each {\VW} such that their performances in terms of aggregated computation capability and amount of GPU memory are similar to each other.
This choice is made to mitigate the straggler problem while reducing the communication overhead within each {\VW}.
As, in terms of computation power, ${\tt V} > {\tt R} > {\tt G} > {\tt Q}$ and, 
in terms of the amount of the GPU memory, ${\tt R}>{\tt V}>{\tt Q}>{\tt G}$, 
two {\MW} are allocated {\tt VVQQ}, while the other two are allocated {\tt RRGG}, 
where {\tt V}, {\tt R}, {\tt G} and {\tt Q} refers to TITAN V, TITAN RTX, GeForce RTX 2060, and Quadro P4000, respectively.
\end{itemize}
Table~\ref{tbl:vwalloc} shows the resource allocation of each {\VW} for the three resource allocation policies.

\begin{table}[t]
\centering
\renewcommand{\arraystretch}{1.2} 
\footnotesize 
\renewcommand{\tabcolsep}{1.0mm} 
\vspace{-0.3cm} 
\caption{Resource allocation for the three policies considered} 
\vspace{-0.2cm} 
\begin{center} 
\begin{tabular}{|*{5}{c|}} 
\hline 
& Node Partition & Equal Distribution & Hybrid Distribution\\ \hline \hline
VW1 & {\tt VVVV} & {\tt VRGQ} & {\tt VVQQ}\\ \hline 
VW2 & {\tt RRRR} & {\tt VRGQ} & {\tt VVQQ}\\ \hline 
VW3 & {\tt GGGG} & {\tt VRGQ} & {\tt RRGG}\\ \hline 
VW4 & {\tt QQQQ} & {\tt VRGQ} & {\tt RRGG}\\ \hline 
\end{tabular} 
\end{center} 
\label{tbl:vwalloc} 
\vspace{-0.3cm}
\end{table}

\begin{figure*}
\centering
\subfigure[ResNet-152]{\includegraphics[scale=0.5]{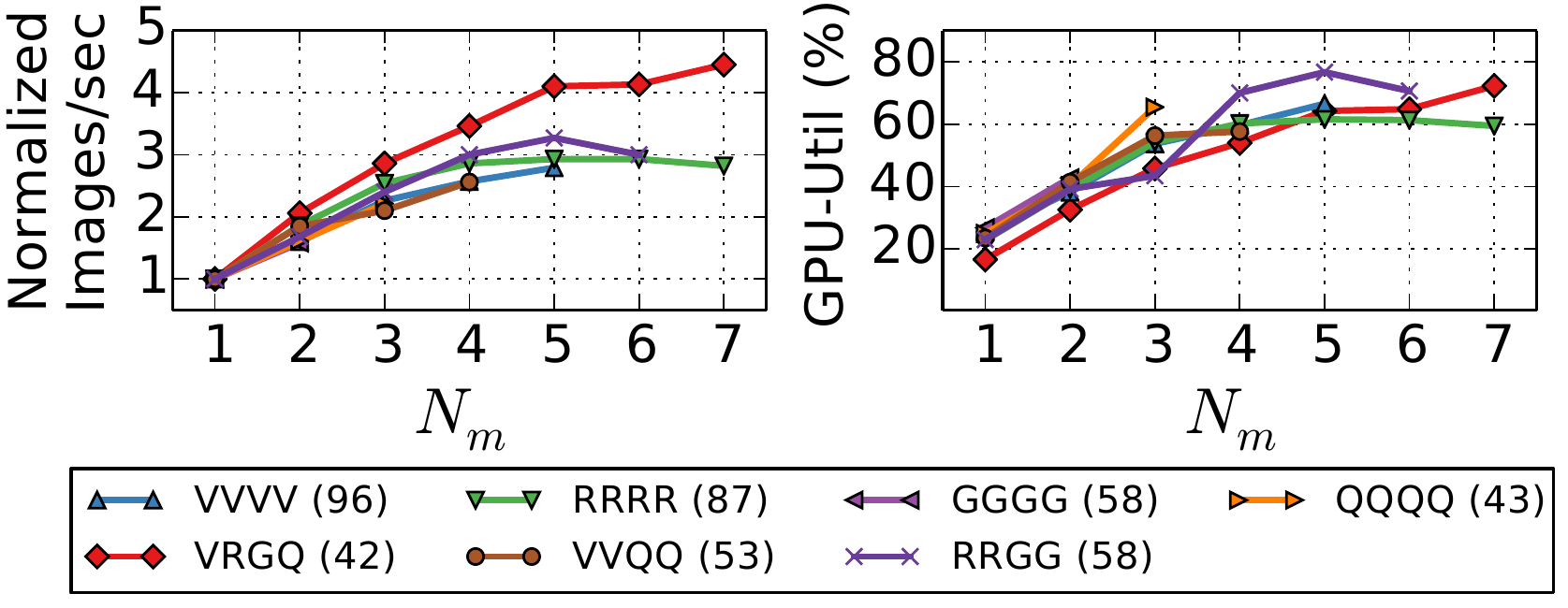}}\qquad
\subfigure[VGG-19]{\includegraphics[scale=0.5]{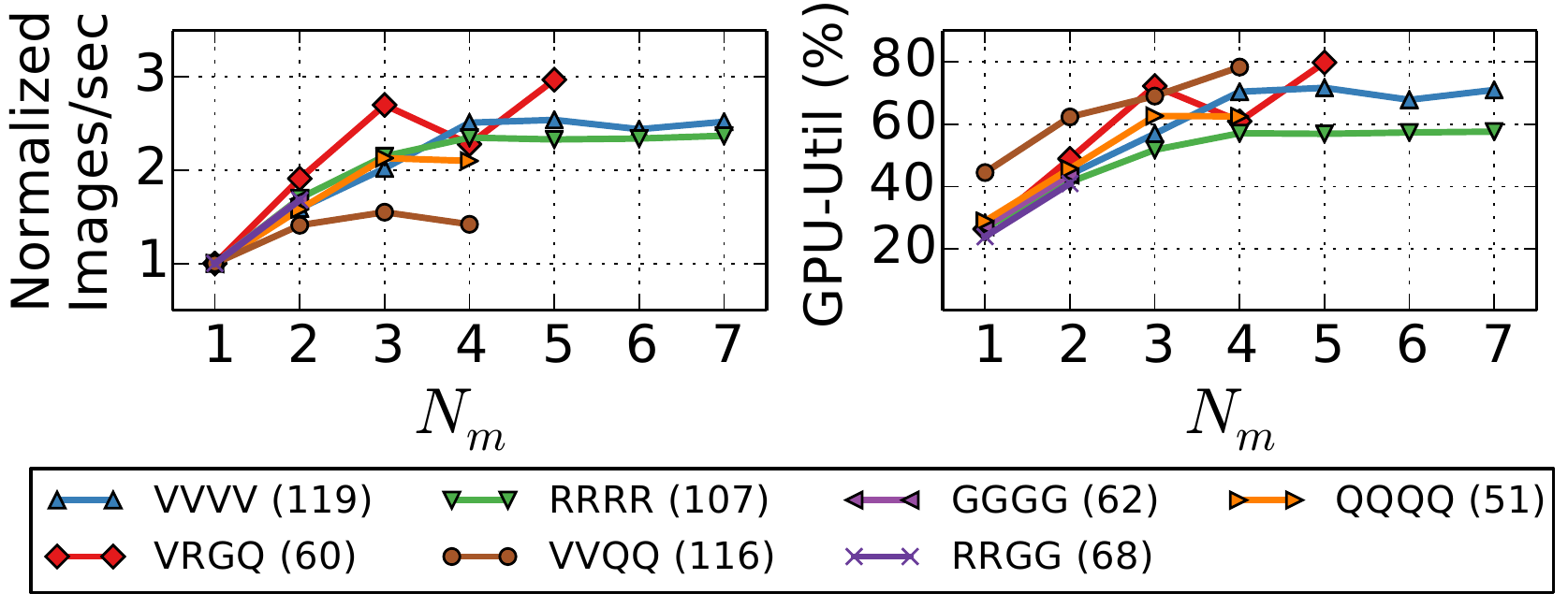}}
\vspace{-0.4cm}
\caption{Normalized throughput and the maximum average GPU utilization among partitions 
in a single {\VW} for various resource allocation policies
as $N_m$ is varied. The number in parenthesis is absolute throughput (images/sec) 
when $N_m = 1$ for each policy.}
\label{fig:single_throughput}
\end{figure*}

\noindent\textbf{Parameter Placement:}
In our experiments, for DP, 
we locate the parameter servers, each of which only handles a portion of the model parameters, over all the nodes.
For the {\it default placement} policy, which can be used with all three of our resource allocation policies, we place layers of the model in round-robin fashion over all the parameter servers as in TensorFlow~\cite{rds}.
For ED, however, a different policy is possible as the same partition of the model is assigned to a GPU on the same node for every {\VW}.
Therefore, we can place the layers of a partition on the parameter server running on the same node, incurring no actual network traffic across the nodes for parameter synchronization.
We refer to this as the {\it local placement} policy, which will be denoted as `local'.
For all results reported hereafter, unless denoted `local', all policies use the `default' policy.

\vspace{-0.4cm}
\subsection{Performance of a single {\VW}}
\vspace{-0.2cm}

We first investigate the performance of the 7 different individual \MW\ that are possible according to the allocation schemes in Table~\ref{tbl:vwalloc}.
Figure~\ref{fig:single_throughput} shows the throughput over various values of $N_m$, which is the number of minibatches executed concurrently, in the {\VW} normalized to that of when $N_m = 1$ and the maximum average GPU utilization among the four partitions for ResNet-152 and VGG-19.
The numbers shown (in the box) along with the allocation policy are the absolute throughput (images/sec) when $N_m = 1$.
Note that some results for larger $N_m$ are not shown.
This is because the GPU memory cannot accommodate such situations and hence, cannot be run.

From the results, we can see that as $N_m$ increases, normalized throughput of a {\VW} 
as well as the maximum GPU utilization generally increases. 
However, depending on the resource allocation scheme (which results in different partitions 
of a model) as well as the DNN model, the effect of having larger $N_m$ varies.
When a {\VW} is configured with homogeneous GPUs, 
the average GPU utilization of each partition is similar to each other.
However, when it is configured with heterogeneous GPUs, 
there is a tendency that the GPU utilization of the first or last partition
is higher than those of the other partitions.
For this configuration, different computation capabilities and memory capacity of the GPUs are considered when partitioning a model. 
As it is possible that only a small number of layers are assigned to some GPUs, the overall 
GPU utilization may turn out to be low.

\vspace{-0.4cm}
\subsection{Performance of multiple {\MW}}
\vspace{-0.2cm}

\begin{figure}
\centering
\vspace{-0.1cm}
\subfigure[ResNet-152]{\includegraphics[scale=0.5]{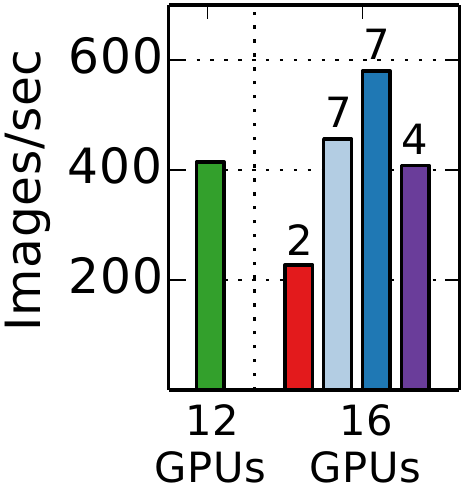}}\quad
\subfigure[VGG-19]{\includegraphics[scale=0.5]{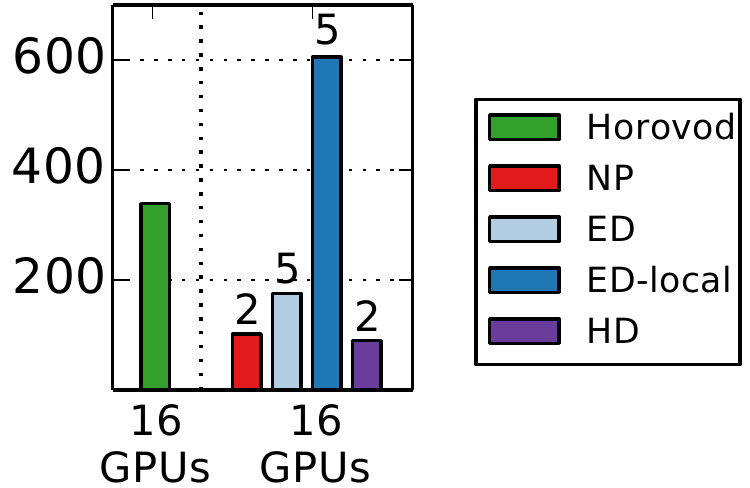}}
\vspace{-0.2cm}
\caption{Performance with the three allocation policies when D=0 (The 
number on bar represents $N_m$)}
\label{fig:multi_throughput}
\end{figure}
Figure~\ref{fig:multi_throughput} shows the throughput 
of training each model with the three resource allocation policies,
where ``Horovod'' indicates the state-of-the-art DP via Horovod that uses AllReduce communication.
In these experiments, for each resource allocation policy, $N_m$ is set such that performance is maximized while every {\VW} uses the same value of $N_m$ as this is the assumption behind HetPipe. 
For ResNet-152, the whole model is too large to be loaded into a single GPU with {\tt G} type, and thus, Horovod uses only 12 GPUs.

The results in Figure~\ref{fig:multi_throughput} show that 
the performance of DNN training is strongly affected by how
heterogeneous GPUs are allocated to {\MW}. From the results, 
we can make the following observations:
\begin{itemize}
\vspace{-0.2cm}
\item For VGG-19 whose parameter size is 548MB, the
performance of Horovod, which reduces communication overhead 
for parameter synchronization, is better than those of NP, ED, and HD.
However, for ResNet-152 whose parameter size is 230MB,
ED and HD, which utilize {\MW} with similar performance, show
similar performance to Horovod (with 12 GPUs).
\vspace{-0.2cm}
\item With NP, training performance of ResNet-152 and VGG-19
is low as $N_m$ is bounded by the {\VW} with the smallest 
GPU memory.
\vspace{-0.2cm}
\item With the local placement policy, intra-communication occurs 
between each GPU and the parameter server, significantly reducing communication overhead across the nodes, especially for VGG-19, the model with a large parameter set.
For VGG-19, the amount of data transferred across the nodes with ED-local (i.e., 103MB) 
is much smaller than that with Horovod (i.e., 515MB).
Thus, the performance of ED-local (which also
mitigates the straggler problem) is 1.8 times higher than Horovod.
For ResNet-152, the amount of data transferred with ED-local (i.e., 298MB) 
is larger than that with Horovod (i.e., 211MB)
because the sizes of output activations to be sent between partitions are large,
even though the parameter size is relatively small.
However, the throughput of ED-local is still 40\% higher than Horovod.
This is because Hetpipe allows each {\VW} to process a large number of minibatches concurrently.
\vspace{-0.2cm}
\end{itemize}

\begin{table}[t]
\vspace{-0.3cm}
\renewcommand{\arraystretch}{1.2} 
\footnotesize 
\renewcommand{\tabcolsep}{1.0mm} 
\caption{Performance improvement of adding whimpy GPUs
(The number in parenthesis presents the total
number of concurrent minibatches in HetPipe)} 
\vspace{-0.2cm}
\begin{center}
\begin{tabular}{|c|c|c|c|c|c|}
\hline
Model & Method & \begin{tabular}[c]{@{}c@{}}4 GPUs\\ 4[V]\end{tabular} & \begin{tabular}[c]{@{}c@{}}8 GPUs\\ 4[VR]\end{tabular} & \begin{tabular}[c]{@{}c@{}}12 GPUs\\ 4[VRQ]\end{tabular} & \begin{tabular}[c]{@{}c@{}}16 GPUs\\ 4[VRQG]\end{tabular} \\ \hline \hline
\multirow{2}{*}{VGG-19} & Horovod & 164 & 205 & 265 & 339 \\ \cline{2-6} 
 & HetPipe & 300(5) & 530(16) & 572(20) & 606(20) \\ \hline
\multirow{2}{*}{ResNet-152} & Horovod & 233 & 353 & 415 & X \\ \cline{2-6} 
 & HetPipe & 256(5) & 516(20) & 538(24) & 580(28) \\ \hline
\end{tabular}
\end{center}
\label{tbl:whimpy}
\vspace{-0.2cm}
\end{table}

Next, we investigate how the throughput is improved when whimpy GPUs 
are additionally used for training. Table~\ref{tbl:whimpy} shows 
the throughput of VGG-19 and ResNet-152 when DP via Horovod and HetPipe with ED-local are used over different sets of heterogeneous GPUs. 
For these experiments, HetPipe is configured to use four {\MW},
except for {\tt V}4 where a single {\VW} is used. 
In the table, the number and type of GPUs used for each experiment
are also given. 
From the results, we can see that the performance of both Horovod and HetPipe increases when additional whimpy GPUs are used for training. 
With additional GPUs, HetPipe can increase the total number of concurrent minibatches processed, having up to 2.3 times speedup.

This scenario can be thought of as an answer to when new, higher end nodes are purchased, but one does not know what to do with existing nodes.
For example, one can imagine node {\tt V} to be the most recent purchase, with earlier systems {\tt R}, {\tt Q}, and {\tt G}.
The results show that making use of the earlier whimpy systems allows for faster training of larger models.

\begin{figure}
\centering
\includegraphics[scale=0.4]{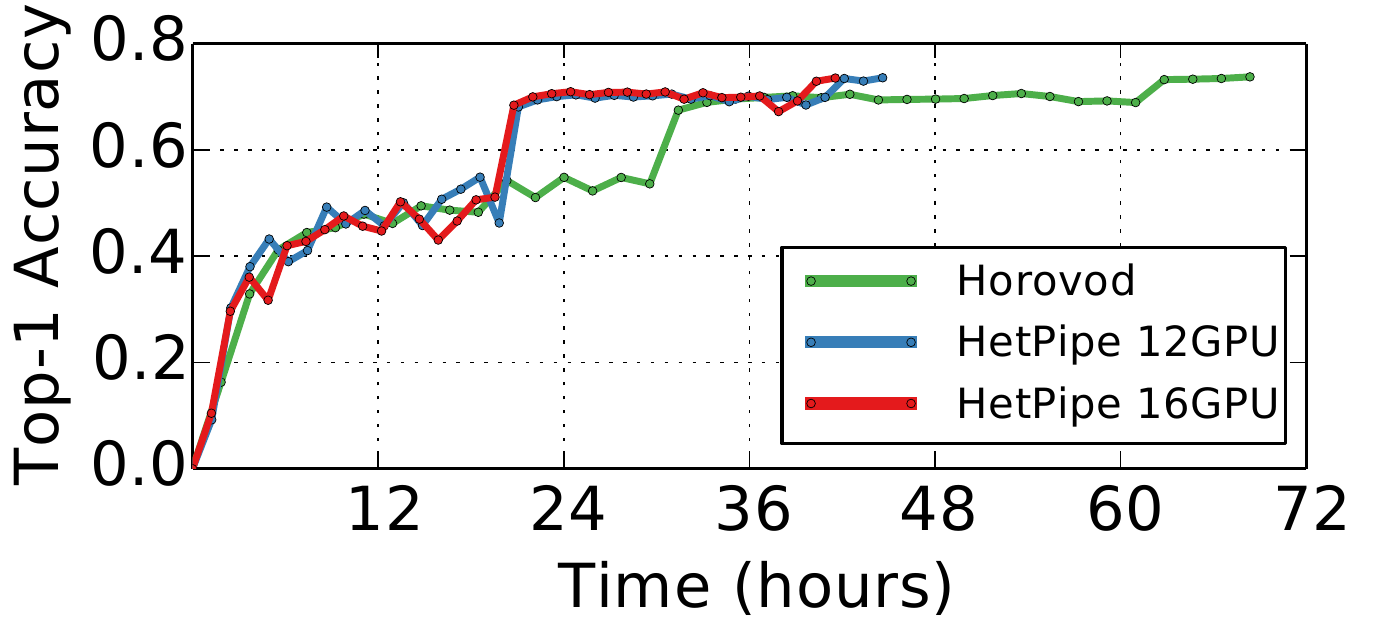}
\vspace{-0.2cm}  
\caption{ResNet-152 top-1 accuracy}
\label{fig:accuracy_resnet152}
\vspace{0.0cm}
\end{figure}

\vspace{-0.3cm}
\subsection{Convergence}
Our HetPipe based on the WSP model is guaranteed to converge as proven in Section~\ref{sec:anal}.
In this section, we analyze the convergence performance of HetPipe with ED-local
using ResNet-152 and VGG-19. For our experiments, the desired
target accuracy of ResNet-152 and VGG-19 is 74\% and 67\%, respectively.

Figure~\ref{fig:accuracy_resnet152} shows the top-1 accuracy of ResNet-152 with Horovod~(12 GPUs), HetPipe~(12 GPUs), and HetPipe~(16 GPUs), where $D$ is set to 0 for HetPipe. 
For the experiments with 12 GPUs, the 4 {\tt G} type GPUs are not used.
When the same set of GPUs are used, convergence with HetPipe is 35\% faster than that of Horovod by reducing the straggler problem in a heterogeneous 
environment and exploiting both PMP and DP. 
Furthermore, by adding four more whimpy {\tt G} GPUs, HetPipe improves training performance even more, converging faster than Horovod by 39\%.

Figure~\ref{fig:accuracy_vgg19} shows the top-1 accuracy of VGG-19 with Horovod and HetPipe as we vary $D$ to 0, 4, and 32. 
For the experiments, all 16 GPUs are used.
The figure shows that convergence with the BSP-like configuration (i.e.,  $D=0$) of HetPipe is roughly 29\% faster than that with Horovod. 
As we increase $D$ to 4, the straggler effect is mitigated and the 
communication overhead due to parameter synchronization is reduced.
Thus, convergence is faster by 28\% and 49\% compared to $D=0$
and Horovod, respectively.
In this experiment with ED-local (where the training speed of each {\VW} is similar), 
when $D$ becomes very large (i.e., 32), 
the throughput remains similar 
but the convergence performance becomes degraded by 4.7\%,
compared to $D=4$.
This is because it is unlikely that the clock distance between the fastest 
and slowest {\MW} becomes large as 32, 
but higher global staleness 
can degrade the convergence performance (similarly discussed in~\cite{ho2013more}). 
Note that though not shown, using larger $D$ has a greater effect for HetPipe with NP, ED and HD resource allocation, and  
the different resource allocations only affect the set of heterogeneous 
GPUs used for each {\VW} and do not affect the convergence behavior. 

We also analyze the synchronization overhead as $D$ is varied.
We find that as $D$ increases, the waiting time of a {\VW} to receive the updated global weight decreases. 
In our experiments, the average waiting time with $D=4$ is found to 
be 62\% of that with $D=0$. Furthermore, the actual idle time is 
only 18\%  of the waiting time 
as the {\VW} can continue to proceed in the pipeline 
while waiting.

\begin{figure}
\centering
\includegraphics[scale=0.4]{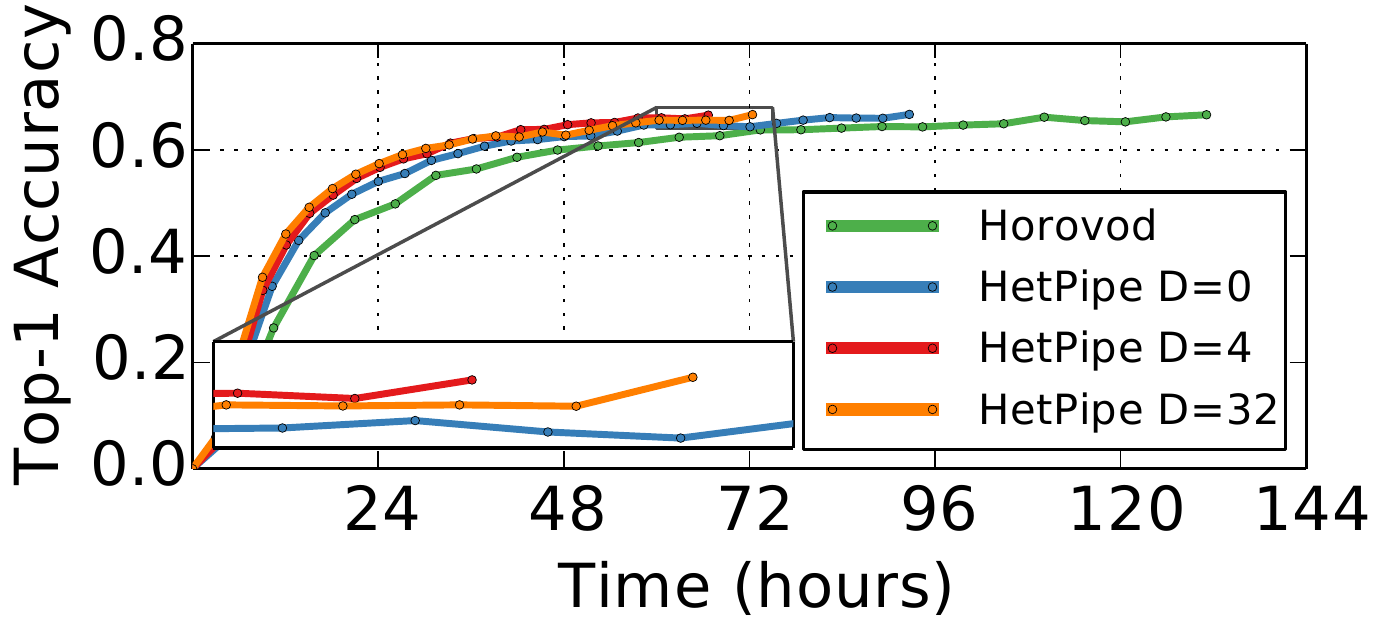}
\vspace{-0.2cm}  
\caption{VGG-19 top-1 accuracy}
\label{fig:accuracy_vgg19}
\end{figure}

\vspace{-0.2cm}
\section{Related Work}
\label{sec:related}
\vspace{-0.2cm}
Pipelining has been leveraged to improve the performance of 
machine learning systems~\cite{chen2018efficient,chen2012pipelined,harlap2018pipedream,huang2018gpipe,li2018pipe}.
A pipelining scheme is employed to handle expensive backpropagation~\cite{chen2012pipelined}.
Pipe-SGD pipelines the processing of a minibatch to hide communication 
time in AllReduce based systems~\cite{li2018pipe}.
A weight prediction technique is proposed to address the staleness issue 
in pipelined model parallelism~\cite{chen2018efficient}.
Detailed comparisons of HetPipe with PipeDream~\cite{harlap2018pipedream}
and GPipe~\cite{huang2018gpipe} are provided in Section~\ref{sec:mp}.
Note that the feature of overlapping computation and communication, presented in PipeDream~\cite{harlap2018pipedream}, 
will also improve the performance of our system.
PipeDream employs the one-forward-one-backward scheduling algorithm for pipeline execution where the minimum number of minibatches that is large
enough to saturate the pipeline are admitted.
Sophisticated schedulers that consider various factors 
such as heterogeneous configurations, the number of partitions, 
and the number of concurrent minibatches within a {\VW},
can potentially improve the performance of HetPipe.
Techniques to optimize learning rates have been studied~\cite{goyal2017accurate}, which can also be applied to HetPipe to help converge faster.

Decentralized training systems that consider heterogeneous environments have also been studied~\cite{luo2019hop,lian2017asynchronous}.
However, these techniques do not consider integration of DP with PMP, which allows support for large models that do not fit into single GPU memory.
In AD-PSGD, once a mini-batch is
processed, a worker updates the parameters by averaging 
them with only one neighbor which is randomly selected~\cite{lian2017asynchronous}. 
This is done asynchronously, allowing faster workers to continue.
In theory, the convergence rate of AD-PSGD is the same as SGD. 
In principle, the contribution of AD-PSGD is orthogonal with the contributions of HetPipe in that we can extend our HetPipe further by adapting the idea of asynchronous decentralized update in AD-PSGD when there is a bottleneck in the parameter server. When it comes to the experimental evaluations, the performance of AD-PSGD is evaluated for DNN models whose sizes are 1MB, 60MB, and 100MB, which are smaller than the models we consider in HetPipe. 
For a decentralized training system, Hop~\cite{luo2019hop} considers the bounded staleness
and backup workers, and uses CIFAR-10
for performance evaluation on a CNN model.

There have been earlier efforts to employ DP and/or 
MP for model training. Project Adam uses both DP and MP 
to train machine learning models on CPUs~\cite{chilimbi2014project}.
Pal et al. combine DP and MP in a similar way as our 
system, but do not consider pipelining nor heterogeneous GPUs~\cite{pal2019optimizing}.
STRADS leverages MP to address the issues of uneven convergence of parameters
and parameter dependencies~\cite{kim2016strads}.
FlexFlow considers utilizing parallelism in various dimensions such as
sample, operator, attribute  and parameters to maximize parallelization  
performance~\cite{jia2018beyond}.
Bounded staleness has been explored where Jiang et al. present heterogeneity-aware parameter synchronization algorithms that are based on the SSP model~\cite{jiang2017heterogeneity}, while Cui et al. analyze the effects of bounded staleness~\cite{cui2014exploiting}.

\vspace{-0.3cm}
\section{Conclusion}
\label{sec:conclusion}
\vspace{-0.3cm}

In this paper, we presented a DNN training system, HetPipe, that integrates pipelined
model parallelism with data parallelism.
Leveraging multiple {\MW}, each of which consists of multiple, possibly whimpy, heterogeneous GPUs,
HetPipe makes it possible to efficiently train large DNN models. 
We proved that HetPipe converges and presented results showing 
the fast convergence of DNN models with HetPipe.



\bibliographystyle{plain}
\bibliography{references.bib}

\begin{thebibliography}{10}

\bibitem{abadi2016tensorflow}
Mart{\'\i}n Abadi, Paul Barham, Jianmin Chen, Zhifeng Chen, Andy Davis, Jeffrey
  Dean, Matthieu Devin, Sanjay Ghemawat, Geoffrey Irving, Michael Isard,
  Manjunath Kudlur, Josh Levenberg, Rajat Monga, Sherry Moore, Derek~G. Murray,
  Benoit Steiner, Paul Tucker, Vijay Vasudevan, Pete Warden, Martin Wicke, Yuan
  Yu, and Xiaoqiang Zheng.
\newblock {TensorFlow: A System for Large-Scale Machine Learning}.
\newblock In {\em Proceedings of the Symposium on Operating Systems Design and
  Implementation (OSDI)}, 2016.

\bibitem{ec2}
Amazon.
\newblock {Amazon EC2 Pricing}.
\newblock \url{https://aws.amazon.com/ec2/pricing/}.

\bibitem{blei2003latent}
David~M Blei, Andrew~Y Ng, and Michael~I Jordan.
\newblock {Latent Dirichlet Allocation}.
\newblock {\em Journal of Machine Learning Research}, 2003.

\bibitem{bottou2010large}
L{\'e}on Bottou.
\newblock {Large-Scale Machine Learning with Stochastic Gradient Descent}.
\newblock In {\em Proceedings of COMPSTAT}, 2010.

\bibitem{chen2018efficient}
Chi-Chung Chen, Chia-Lin Yang, and Hsiang-Yun Cheng.
\newblock {Efficient and Robust Parallel DNN Training through Model Parallelism
  on Multi-GPU Platform}.
\newblock {\em arXiv preprint arXiv:1809.02839}, 2018.

\bibitem{chen2012pipelined}
Xie Chen, Adam Eversole, Gang Li, Dong Yu, and Frank Seide.
\newblock {Pipelined Back-Propagation for Context-Dependent Deep Neural
  Networks}.
\newblock In {\em Proceedings of the Annual Conference of the International
  Speech Communication Association}, 2012.

\bibitem{chilimbi2014project}
Trishul Chilimbi, Yutaka Suzue, Johnson Apacible, and Karthik Kalyanaraman.
\newblock {Project Adam: Building an Efficient and Scalable Deep Learning
  Training System}.
\newblock In {\em Proceedings of the Symposium on Operating Systems Design and
  Implementation (OSDI)}, 2014.

\bibitem{collobert2008unified}
Ronan Collobert and Jason Weston.
\newblock {A Unified Architecture for Natural Language Processing: Deep Neural
  Networks with Multitask Learning}.
\newblock In {\em Proceedings of the International Conference on Machine
  Learning (ICML)}, 2008.

\bibitem{cui2014exploiting}
Henggang Cui, James Cipar, Qirong Ho, Jin~Kyu Kim, Seunghak Lee, Abhimanu
  Kumar, Jinliang Wei, Wei Dai, Gregory~R Ganger, Phillip~B Gibbons, Garth~A
  Gibson, and Eric~P Xing.
\newblock {Exploiting Bounded Staleness to Speed Up Big Data Analytics}.
\newblock In {\em Proceedings of the USENIX Annual Technical Conference (ATC)},
  2014.

\bibitem{dean2012distbelief}
Jeffrey Dean, Greg Corrado, Rajat Monga, Kai Chen, Matthieu Devin, Quoc~V Le,
  Mark Mao, Marc'Aurelio Ranzato, Andrew Senior, Paul Tucker, Ke~Yang, and
  Andrew Y.~Ng.
\newblock {Large Scale Distributed Deep Networks}.
\newblock In {\em Proceedings of the Advances in Neural Information Processing
  Systems (NIPS)}, 2012.

\bibitem{deng2009imagenet}
Jia Deng, Wei Dong, Richard Socher, Li-Jia Li, Kai Li, and Li~Fei-Fei.
\newblock {ImageNet: A Large-Scale Hierarchical Image Database}.
\newblock In {\em Proceedings of the Conference on Computer Vision and Pattern
  Recognition (CVPR)}, 2009.

\bibitem{dorogush2018catboost}
Anna~Veronika Dorogush, Vasily Ershov, and Andrey Gulin.
\newblock {CatBoost: gradient boosting with categorical features support}.
\newblock In {\em Proceedings of the Advances in Neural Information Processing
  Systems (NIPS)}, 2018.

\bibitem{goyal2017accurate}
Priya Goyal, Piotr Doll{\'a}r, Ross Girshick, Pieter Noordhuis, Lukasz
  Wesolowski, Aapo Kyrola, Andrew Tulloch, Yangqing Jia, and Kaiming He.
\newblock {Accurate, large minibatch sgd: Training imagenet in 1 hour}.
\newblock {\em arXiv preprint arXiv:1706.02677}, 2017.

\bibitem{he2016deep}
Kaiming He, Xiangyu Zhang, Shaoqing Ren, and Jian Sun.
\newblock {Deep Residual Learning for Image Recognition}.
\newblock In {\em Proceedings of the Conference on Computer Vision and Pattern
  Recognition (CVPR)}, 2016.

\bibitem{hinton2012deep}
Geoffrey Hinton, Li~Deng, Dong Yu, George Dahl, Abdel-rahman Mohamed, Navdeep
  Jaitly, Andrew Senior, Vincent Vanhoucke, Patrick Nguyen, Tara Sainath, and
  Brian Kingsbury.
\newblock {Deep Neural Networks for Acoustic Modeling in Speech Recognition}.
\newblock {\em IEEE Signal Processing Magazine}, 2012.

\bibitem{ho2013more}
Qirong Ho, James Cipar, Henggang Cui, Seunghak Lee, Jin~Kyu Kim, Phillip~B
  Gibbons, Garth~A Gibson, Greg Ganger, and Eric~P Xing.
\newblock {More Effective Distributed ML via a Stale Synchronous Parallel
  Parameter Server}.
\newblock In {\em Proceedings of the Advances in Neural Information Processing
  Systems (NIPS)}, 2013.

\bibitem{huang2018gpipe}
Yanping Huang, Yonglong Cheng, Dehao Chen, HyoukJoong Lee, Jiquan Ngiam, Quoc~V
  Le, and Zhifeng Chen.
\newblock {GPipe: Efficient Training of Giant Neural Networks using Pipeline
  Parallelism}.
\newblock In {\em Proceedings of the Advances in Neural Information Processing
  Systems (NIPS)}, 2019.

\bibitem{cplex}
{IBM}.
\newblock {CPLEX-Optimizer}.
\newblock \url{https://www.ibm.com/analytics/cplex-optimizer/}.

\bibitem{jayarajan2019priority}
Anand Jayarajan, Jinliang Wei, Garth Gibson, Alexandra Fedorova, and Gennady
  Pekhimenko.
\newblock {Priority-based Parameter Propagation for Distributed DNN Training}.
\newblock In {\em Proceedings of the Conference on Systems and Machine Learning
  (SysML)}, 2019.

\bibitem{jia2018beyond}
Zhihao Jia, Matei Zaharia, and Alex Aiken.
\newblock {Beyond Data and Model Parallelism for Deep Neural Networks}.
\newblock In {\em Proceedings of the Conference on Systems and Machine Learning
  (SysML)}, 2019.

\bibitem{jiang2017heterogeneity}
Jiawei Jiang, Bin Cui, Ce~Zhang, and Lele Yu.
\newblock {Heterogeneity-aware Distributed Parameter Servers}.
\newblock In {\em Proceedings of the International Conference on Management of
  Data (SIGMOD)}, 2017.

\bibitem{8752958}
Wenbin Jiang, Geyan Ye, Laurence~T Yang, Jian Zhu, Yang Ma, Xia Xie, and Hai
  Jin.
\newblock {A Novel Stochastic Gradient Descent Algorithm Based on Grouping over
  Heterogeneous Cluster Systems for Distributed Deep Learning}.
\newblock In {\em Proceedings of the IEEE/ACM International Symposium on
  Cluster Computing and the Grid (CCGrid)}, 2019.

\bibitem{jin2019split}
Tian Jin and Seokin Hong.
\newblock {Split-CNN: Splitting Window-based Operations in Convolutional Neural
  Networks for Memory System Optimization}.
\newblock In {\em Proceedings of the International Conference on Architectural
  Support for Programming Languages and Operating Systems (ASPLOS)}, 2019.

\bibitem{kim2016strads}
Jin~Kyu Kim, Qirong Ho, Seunghak Lee, Xun Zheng, Wei Dai, Garth~A Gibson, and
  Eric~P Xing.
\newblock {STRADS: A Distributed Framework for Scheduled Model Parallel Machine
  Learning}.
\newblock In {\em Proceedings of the European Conference on Computer Systems
  (EuroSys)}, 2016.

\bibitem{krizhevsky2014one}
Alex Krizhevsky.
\newblock {One weird trick for parallelizing convolutional neural networks}.
\newblock {\em arXiv preprint arXiv:1404.5997}, 2014.

\bibitem{2012alexnet}
Alex Krizhevsky, Ilya Sutskever, and Geoffrey~E. Hinton.
\newblock {ImageNet Classification with Deep Convolutional Neural Networks}.
\newblock In {\em Proceedings of the Advances in Neural Information Processing
  Systems (NIPS)}, 2012.

\bibitem{lee2014model}
Seunghak Lee, Jin~Kyu Kim, Xun Zheng, Qirong Ho, Garth~A Gibson, and Eric~P
  Xing.
\newblock {On Model Parallelization and Scheduling Strategies for Distributed
  Machine Learning}.
\newblock In {\em Proceedings of the Advances in Neural Information Processing
  Systems (NIPS)}, 2014.

\bibitem{li2014scaling}
Mu~Li, David~G Andersen, Jun~Woo Park, Alexander~J Smola, Amr Ahmed, Vanja
  Josifovski, James Long, Eugene~J Shekita, and Bor-Yiing Su.
\newblock {Scaling Distributed Machine Learning with the Parameter Server}.
\newblock In {\em Proceedings of the Symposium on Operating Systems Design and
  Implementation (OSDI)}, 2014.

\bibitem{li2018pipe}
Youjie Li, Mingchao Yu, Songze Li, Salman Avestimehr, Nam~Sung Kim, and
  Alexander Schwing.
\newblock {Pipe-SGD: A Decentralized Pipelined SGD Framework for Distributed
  Deep Net Training}.
\newblock In {\em Proceedings of the Advances in Neural Information Processing
  Systems (NIPS)}, 2018.

\bibitem{lian2017asynchronous}
Xiangru Lian, Wei Zhang, Ce~Zhang, and Ji~Liu.
\newblock {Asynchronous Decentralized Parallel Stochastic Gradient Descent}.
\newblock In {\em Proceedings of the International Conference on Machine
  Learning (ICML)}, 2018.

\bibitem{luo2019hop}
Qinyi Luo, Jinkun Lin, Youwei Zhuo, and Xuehai Qian.
\newblock {Hop: Heterogeneity-aware Decentralized Training}.
\newblock In {\em Proceedings of the International Conference on Architectural
  Support for Programming Languages and Operating Systems (ASPLOS)}, 2019.

\bibitem{meng2016mllib}
Xiangrui Meng, Joseph Bradley, Burak Yavuz, Evan Sparks, Shivaram Venkataraman,
  Davies Liu, Jeremy Freeman, DB~Tsai, Manish Amde, Sean Owen, Doris Xin,
  Reynold Xin, Michael~J. Franklin, Reza Zadeh, Matei Zahria, and Ameet
  Talwalkar.
\newblock {MLlib: Machine Learning in Apache Spark}.
\newblock {\em The Journal of Machine Learning Research}, 2016.

\bibitem{azure}
Microsoft.
\newblock {Microsoft Azure Pricing}.
\newblock \url{https://azure.microsoft.com/en-us/pricing/}.

\bibitem{harlap2018pipedream}
Deepak Narayanan, Aaron Harlap, Amar Phanishayee, Vivek Seshadri, Nikhil~R.
  Devanur, Gregory~R. Ganger, Phillip~B. Gibbons, and Matei Zaharia.
\newblock {PipeDream: Generalized Pipeline Parallelism for DNN Training}.
\newblock In {\em Proceedings of the ACM Symposium on Operating Systems
  Principles (SOSP)}, 2019.

\bibitem{rtx2060}
NVIDIA.
\newblock {GeForce RTX 2060}.
\newblock \url{https://www.nvidia.com/en-us/geforce/graphics-cards/rtx-2060/}.

\bibitem{p4000}
NVIDIA.
\newblock {Quadro P4000}.
\newblock
  \url{https://www.nvidia.com/en-us/design-visualization/quadro-desktop-gpus/}.

\bibitem{titanrtx}
NVIDIA.
\newblock {TITAN RTX}.
\newblock \url{https://www.nvidia.com/en-us/titan/titan-rtx/}.

\bibitem{titanv}
NVIDIA.
\newblock {TITAN V}.
\newblock \url{https://www.nvidia.com/en-us/titan/titan-v/}.

\bibitem{pal2019optimizing}
Saptadeep Pal, Eiman Ebrahimi, Arslan Zulfiqar, Yaosheng Fu, Victor Zhang,
  Szymon Migacz, David Nellans, and Puneet Gupta.
\newblock {Optimizing Multi-GPU Parallelization Strategies for Deep Learning
  Training}.
\newblock {\em IEEE Micro}, 2019.

\bibitem{patarasuk2009bandwidth}
Pitch Patarasuk and Xin Yuan.
\newblock {Bandwidth Optimal All-reduce Algorithms for Clusters of
  Workstations}.
\newblock {\em Journal of Parallel and Distributed Computing}, 2009.

\bibitem{qi2016paleo}
Hang Qi, Evan~R Sparks, and Ameet Talwalkar.
\newblock {Paleo: A Performance Model for Deep Neural Networks}.
\newblock In {\em Proceedings of the Conference on International Conference on
  Learning Representations (ICLR)}, 2017.

\bibitem{real2019regularized}
Esteban Real, Alok Aggarwal, Yanping Huang, and Quoc~V Le.
\newblock {Regularized Evolution for Image Classifier Architecture Search}.
\newblock In {\em Proceedings of the Conference on Association for the
  Advancement of Artificial Intelligence (AAAI)}, 2019.

\bibitem{recht2011hogwild}
Benjamin Recht, Christopher Re, Stephen Wright, and Feng Niu.
\newblock {HOGWILD!: A Lock-Free Approach to Parallelizing Stochastic Gradient
  Descent}.
\newblock In {\em Proceedings of the Advances in Neural Information Processing
  Systems (NIPS)}, 2011.

\bibitem{saqib2017study}
Muhammad Saqib, Sultan~Daud Khan, Nabin Sharma, and Michael Blumenstein.
\newblock {A Study on Detecting Drones Using Deep Convolutional Neural
  Networks}.
\newblock In {\em IEEE International Conference on Advanced Video and Signal
  Based Surveillance (AVSS)}, 2017.

\bibitem{sergeev2018horovod}
Alexander Sergeev and Mike Del~Balso.
\newblock {Horovod: fast and easy distributed deep learning in TensorFlow}.
\newblock {\em arXiv preprint arXiv:1802.05799}, 2018.

\bibitem{simonyan2014very}
Karen Simonyan and Andrew Zisserman.
\newblock {Very Deep Convolutional Networks for Large-Scale Image Recognition}.
\newblock {\em arXiv preprint arXiv:1409.1556}, 2014.

\bibitem{rds}
TensorFlow.
\newblock {tf.train.replica\_device\_setter}.
\newblock
  \url{https://www.tensorflow.org/api_docs/python/tf/compat/v1/train/replica_device_setter/}.

\bibitem{vaswani2017attention}
Ashish Vaswani, Noam Shazeer, Niki Parmar, Jakob Uszkoreit, Llion Jones,
  Aidan~N Gomez, {\L}ukasz Kaiser, and Illia Polosukhin.
\newblock {Attention Is All You Need}.
\newblock In {\em Proceedings of the Advances in Neural Information Processing
  Systems (NIPS)}, 2017.

\bibitem{wang2019supporting}
Minjie Wang, Chien-chin Huang, and Jinyang Li.
\newblock {Supporting Very Large Models using Automatic Dataflow Graph
  Partitioning}.
\newblock In {\em Proceedings of the European Conference on Computer Systems
  (EuroSys)}, 2019.

\bibitem{wang2017internal}
Tian Wang, Yang Chen, Mengyi Zhang, Jie Chen, and Hichem Snoussi.
\newblock {Internal Transfer Learning for Improving Performance in Human Action
  Recognition for Small Datasets}.
\newblock {\em IEEE Access}, 2017.

\bibitem{yu2017image}
FengLi Yu, Jing Sun, Annan Li, Jun Cheng, Cheng Wan, and Jiang Liu.
\newblock {Image Quality Classification for DR Screening Using Deep Learning}.
\newblock In {\em Annual International Conference of the IEEE Engineering in
  Medicine and Biology Society (EMBC)}, 2017.

\bibitem{zhao2016fast}
Shen-Yi Zhao and Wu-Jun Li.
\newblock {Fast Asynchronous Parallel Stochastic Gradient Descent: A Lock-Free
  Approach with Convergence Guarantee}.
\newblock In {\em Proceedings of the Conference on Association for the
  Advancement of Artificial Intelligence (AAAI)}, 2016.

\bibitem{zhu2018benchmarking}
Hongyu Zhu, Mohamed Akrout, Bojian Zheng, Andrew Pelegris, Anand Jayarajan,
  Amar Phanishayee, Bianca Schroeder, and Gennady Pekhimenko.
\newblock {Benchmarking and Analyzing Deep Neural Network Training}.
\newblock In {\em IEEE International Symposium on Workload Characterization
  (IISWC)}, 2018.

\bibitem{zinkevich2009slow}
Martin Zinkevich, John Langford, and Alex~J Smola.
\newblock {Slow Learners are Fast}.
\newblock In {\em Proceedings of the Advances in Neural Information Processing
  Systems (NIPS)}, 2009.

\end{thebibliography}

\onecolumn
\appendix
\newpage
\appendix
\section{Appendix}
\label{appendix:a}
The analysis follows Qirong Ho et al.~\cite{ho2013more}, except the addition of global staleness $s_g = s_{global}$ and local staleness $s_l = s_{local} + 1$.\\
 \textbf{Theorem 1.} \textit{Suppose $w^*$ is the minimizer of convex function $f(w)$. Let $\displaystyle u_t \coloneqq - \eta_t \nabla f_t\left(\Tilde{w}_t\right)$ where $\eta_t = \frac{\sigma}{\sqrt{t}}$ with $\sigma = \frac{M}{L\sqrt{(2s_g + s_l)N}}$, in which $M, L$ are the constants defined in the assumptions. We assume that the components $f_t$ are also convex. Then after $T$ iterations, the regret is bounded as
\begin{equation}
    R[W] \leq 4ML\sqrt{\frac{(2s_g + s_l)N}{T}}, \notag
\end{equation}
with
\begin{equation*}
   f(w) \coloneqq \frac{1}{T} \sum_{t = 1}^T f_t(w), \qquad \text{and} \qquad R[W] \coloneqq \frac{1}{T}\sum_{t = 1}^T f_t\left(\Tilde{w}_t\right) - f\left(w^*\right).
\end{equation*}
}
\begin{proof}
    Since $f_t$ are convex
\begin{equation}
    T\cdot R[W] \leq \sum_{t=1}^T \left\langle \nabla f_t\left(\Tilde{w}_t\right), \tilde{w}_t - w^* \right\rangle = \sum_{t=1}^T \left\langle \tilde{g}_t, \tilde{w}_t - w^* \right\rangle,
\end{equation}
where $\tilde{g}_t \coloneqq \nabla f_t(\tilde{w}_t)$. If $T \cdot R[W] \le \mathcal{O}(\sqrt{T})$, we will have $\mathbb{E}_t \{f_t(\tilde{w}_t) - f_t(w^*)\} \to 0$ and thus convergence. First, we need this lemma. Note that this is not Lemma~\ref{lem:cardinality} in the paper:
\begin{lemma}
    With $\tilde{w}_t = w_t - \left[\sum_{i \in \mathcal{R}_t} u_i\right] + \left[\sum_{i \in \mathcal{Q}_t} u_i\right]$, for all $w^* \in \mathbb{R}^n$ and $t > 0$, we have
    \begin{equation}
                \left\langle \tilde{w}_t - w^*, \tilde{g}_t \right\rangle = \frac{\eta_t}{2} \Vert \tilde{g}_t \Vert^2 + \frac{D\left(w^* \Vert w_{t+1}\right) - D\left(w^* \Vert w_{t}\right)}{\eta_t}
        + \left[\sum_{i \in \mathcal{R}_t} \eta_i \left\langle \tilde{g}_i, \tilde{g}_t \right\rangle - \sum_{i \in \mathcal{Q}_i} \eta_i \left\langle \tilde{g}_i, \tilde{g}_t \right\rangle\right],
    \end{equation}
    with $D\left(w\Vert w'\right) = \frac{1}{2} \Vert w - w' \Vert^2$.
\end{lemma}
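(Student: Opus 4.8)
The plan is to reduce the claim to the standard ``law of cosines'' (three-point) identity from online mirror descent, after first isolating the staleness correction terms. Two elementary facts drive everything. First, the reference sequence obeys the noise-free recursion $w_{t+1} = w_t + u_t = w_t - \eta_t \tilde g_t$, so that $\tilde g_t = (w_t - w_{t+1})/\eta_t$. Second, every individual update is a scaled stale gradient, $u_i = -\eta_i \tilde g_i$. Both are immediate from the definition $u_t \coloneqq -\eta_t \nabla f_t(\tilde w_t)$ in Theorem~1 together with the WSP bookkeeping that defines $\mathcal{R}_t$ (updates already folded into $w_t$ but not yet seen by the worker) and $\mathcal{Q}_t$ (updates seen by the worker but not yet folded into $w_t$).

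First I would split the left-hand side using the given expression for $\tilde w_t$. Writing $\tilde w_t - w^* = (w_t - w^*) + (\tilde w_t - w_t)$ with $\tilde w_t - w_t = \sum_{i \in \mathcal{Q}_t} u_i - \sum_{i \in \mathcal{R}_t} u_i$, linearity of $\langle\cdot,\cdot\rangle$ gives $\langle \tilde w_t - w^*, \tilde g_t\rangle = \langle w_t - w^*, \tilde g_t\rangle + \sum_{i \in \mathcal{Q}_t}\langle u_i, \tilde g_t\rangle - \sum_{i \in \mathcal{R}_t}\langle u_i, \tilde g_t\rangle$. Substituting $u_i = -\eta_i \tilde g_i$ turns the two correction sums into exactly the bracketed term $\sum_{i \in \mathcal{R}_t}\eta_i\langle \tilde g_i, \tilde g_t\rangle - \sum_{i \in \mathcal{Q}_t}\eta_i\langle \tilde g_i, \tilde g_t\rangle$ appearing in the statement. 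It therefore suffices to establish the single-step identity $\langle w_t - w^*, \tilde g_t\rangle = \frac{\eta_t}{2}\|\tilde g_t\|^2 + \frac{1}{\eta_t}\big(D(w^*\|w_{t+1}) - D(w^*\|w_t)\big)$, with the sign of the $D$-difference pinned down by carefully tracking the expansion below.

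For that identity I would expand the Bregman term $D(w^*\|w_{t+1}) = \frac12\|w^* - w_{t+1}\|^2$ using $w_{t+1} = w_t - \eta_t \tilde g_t$: this yields $D(w^*\|w_{t+1}) = D(w^*\|w_t) + \eta_t\langle w^* - w_t, \tilde g_t\rangle + \frac{\eta_t^2}{2}\|\tilde g_t\|^2$, hence $D(w^*\|w_{t+1}) - D(w^*\|w_t) = \eta_t\langle w^* - w_t, \tilde g_t\rangle + \frac{\eta_t^2}{2}\|\tilde g_t\|^2$. Dividing by $\eta_t$, moving the squared-norm term across, and flipping the sign of the inner product isolates $\langle w_t - w^*, \tilde g_t\rangle$ in the claimed form. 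Plugging this back into the decomposition from the previous paragraph completes the proof.

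The step I expect to demand the most care is the index bookkeeping rather than any real analysis: one must verify that $w_t$ is defined as the in-order sum of all updates issued so far (so the clean recursion $w_{t+1} = w_t + u_t$ genuinely holds under WSP), that each ``missing'' update in $\mathcal{R}_t$ and each ``extra'' update in $\mathcal{Q}_t$ enters with the correct sign, and crucially that those correction terms carry their own step sizes $\eta_i$ and not $\eta_t$, whereas the law-of-cosines piece uses only the current $\eta_t$. The algebra itself — a single Bregman expansion — is routine; the subtlety is entirely in aligning the definitions of $w_t$, $\tilde w_t$, $\mathcal{R}_t$, $\mathcal{Q}_t$ with the WSP update semantics so that the mirror-descent term, the $\frac{\eta_t}{2}\|\tilde g_t\|^2$ term, and the staleness cross-terms separate cleanly, and in fixing the sign convention for $D$.
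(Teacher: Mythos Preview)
Your proposal is correct and is essentially the paper's own argument, just run in the reverse order: the paper first expands $D(w^*\Vert w_{t+1})-D(w^*\Vert w_t)$ using $w_{t+1}=w_t-\eta_t\tilde g_t$ and then splits off $\langle w_t-\tilde w_t,\tilde g_t\rangle$ via $u_i=-\eta_i\tilde g_i$, whereas you split $\tilde w_t-w^*=(w_t-w^*)+(\tilde w_t-w_t)$ first and then do the Bregman expansion, but the algebra is identical. Your instinct to track the sign of the $D$-difference carefully is well placed: the expansion you wrote actually produces $-\tfrac{1}{\eta_t}\bigl(D(w^*\Vert w_{t+1})-D(w^*\Vert w_t)\bigr)$, the opposite sign from the stated lemma, and the same slip appears in the paper's passage from its intermediate identity to its final displayed formula.
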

\begin{proof} We have
\begin{equation}
        D\left(w^* \Vert w_{t+1}\right) - D\left(w^* \Vert w_{t}\right) = \frac{1}{2} \eta_t^2 \left\Vert \tilde{g}_t \right\Vert^2 - \eta_t \left\langle \tilde{w}_t - w^*, \tilde{g}_t \right\rangle
    - \eta_t \left\langle w_t - \tilde{w}_t, \tilde{g}_t \right\rangle,
\end{equation}
with the last term is
\begin{equation}
    \left\langle w_t - \tilde{w}_t, \tilde{g}_t \right\rangle = - \sum_{i \in \mathcal{R}_t}\eta_i \left\langle \tilde{g}_i, \tilde{g}_t \right\rangle + \sum_{i \in \mathcal{Q}_t} \eta_i \left\langle \tilde{g}_i, \tilde{g}_t \right\rangle.
\end{equation}
Therefore,
\begin{equation}
        \left\langle \tilde{w}_t - w^*, \tilde{g}_t \right\rangle = \frac{\eta_t}{2} \Vert \tilde{g}_t \Vert^2 + \frac{D\left(w^* \Vert w_{t+1}\right) - D\left(w^* \Vert w_{t}\right)}{\eta_t}
    + \left[\sum_{i \in \mathcal{R}_t}\eta_i \left\langle \tilde{g}_i, \tilde{g}_t \right\rangle - \sum_{i \in \mathcal{Q}_t} \eta_i \left\langle \tilde{g}_i, \tilde{g}_t \right\rangle\right].
\end{equation}
\end{proof}We use the above Lemma to find the upper bound of each term in the regret $R[W]$:
\begin{align}
    \sum_{t = 1}^T \frac{\eta_t}{2} \Vert \tilde{g}_t \Vert^2 & \leq \sum_{t = 1}^T \frac{\eta_t}{2} L^2 \quad \text{ ($L$-Lipschitz assumption)} \nonumber \\
    & = \frac{1}{2}\sum_{t = 1}^T \frac{\sigma}{\sqrt{t}} L^2 \leq \sigma L^2 \sqrt{T} \quad \left(\text{since } \sum_{k=a}^b \frac{1}{2\sqrt{k}} \le \sqrt{b - a + 1}\right), \nonumber
\end{align}
and
\begin{align}
    & \sum_{t = 1}^T \frac{D\left(w^* \Vert w_{t+1}\right) - D\left(w^* \Vert w_{t}\right)}{\eta_t} \notag \\
    = \quad & \frac{D(w^* \Vert w_1)}{\eta_1} - \frac{D(x^* \Vert w_{T+1})}{\eta_t} + \sum_{t=2}^{T}\left[D(w^* \Vert w_t) \left(\frac{1}{\eta_t} - \frac{1}{\eta_{t-1}} \right)\right] \notag \\
    \leq \quad & \frac{M^2}{\sigma} + 0 + \frac{M^2}{\sigma} \sum_{t=2}^T \left[\sqrt{t} - \sqrt{t - 1}\right] \quad \text{(Bounded distances assumption)} \notag \\
    = \quad & \frac{M^2}{\sigma}\sqrt{T},
\end{align}
and
\begin{align}
    & \sum_{t = 1}^T \left[\sum_{i \in \mathcal{R}_t}\eta_i \left\langle \tilde{g}_i, \tilde{g}_t \right\rangle - \sum_{i \in \mathcal{Q}_t} \eta_i \left\langle \tilde{g}_i, \tilde{g}_t \right\rangle\right] \nonumber \\
    \leq \quad & \sum_{t = 1}^T \left(\left\vert \mathcal{R}_t \right\vert + \left\vert \mathcal{Q}_t \right\vert\right) \eta_{\max(1, t - (s_g + s_l)N)} L^2 \nonumber \\
    \quad & \qquad \qquad \qquad (\text{from paper's lemma: } \min \left(\mathcal{R}_t \cup \mathcal{Q}_t\right) \geq \max (1, t - (s_g + s_l)N)) \nonumber \\
    = \quad & L^2 \left[\sum_{t=1}^{(s_g + s_l)N} \left(\left\vert \mathcal{R}_t \right\vert + \left\vert \mathcal{Q}_t \right\vert\right) \eta_1 + \sum_{t=(s_g + s_l)N + 1}^T \left(\left\vert \mathcal{R}_t \right\vert + \left\vert \mathcal{Q}_t \right\vert\right) \eta_{t - (s_g + s_l)N}\right] \nonumber \\
    \quad & \qquad \qquad \qquad (\text{split the sum and use decreasing sequence property of } \{\eta_t\}) \nonumber \\
    = \quad & L^2 \left[\sum_{t=1}^{(s_g + s_l)N} \left(\left\vert \mathcal{R}_t \right\vert + \left\vert \mathcal{Q}_t \right\vert\right) \sigma + \sum_{t=(s_g + s_l)N + 1}^T \left(\left\vert \mathcal{R}_t \right\vert + \left\vert \mathcal{Q}_t \right\vert\right) \frac{\sigma}{\sqrt{t - (s_g + s_l)N}}\right] \nonumber \\
    \leq \quad & \sigma L^2 (2s_g + s_l)(N - 1) \left[(s_g + s_l)N + \sum_{t=(s_g + s_l)N + 1}^T \frac{1}{\sqrt{t - (s_g + s_l)N}}\right] \nonumber \\
    \quad & \qquad \qquad \qquad (\text{from paper's lemma: } \vert \mathcal{R}_t \vert + \vert \mathcal{Q}_t \vert \leq (2s_g + s_l)(N - 1)) \nonumber \\
    \leq \quad & \sigma L^2 (2s_g + s_l)N \left[(s_g + s_l)N + 2\sqrt{T - (s_g + s_l)N}\right] \qquad \left(\text{since } \sum_{k=a}^b \frac{1}{2\sqrt{k}} \le \sqrt{b - a + 1}\right) \nonumber \\
    \leq \quad & \sigma L^2 (2s_g + s_l)(s_g + s_l)N^2 + 2 \sigma L^2 (2s_g + s_l)N \sqrt{T}
\end{align}
Therefore,
\begin{equation}
        T \cdot R[W] \leq \sigma L^2 \sqrt{T} + \frac{M^2}{\sigma}\sqrt{T} + \sigma L^2 (2s_g + s_l)(s_g + s_l)N^2
    + 2 \sigma L^2 (2s_g + s_l)N \sqrt{T}
\end{equation}
Let the initial $\sigma = \frac{M}{L\sqrt{(2s_g + s_l)N}}$, then
\begin{align}
    T \cdot R[W] & \leq \frac{ML\sqrt{T}}{\sqrt{(2s_g + s_l)N}} + ML\sqrt{(2s_g + s_l)NT}  + ML(s_g + Fs_l)N\sqrt{(2s_g + s_l)N} + 2ML \sqrt{(2s_g + s_l)NT} \nonumber \\
    & = ML \sqrt{(2s_g + s_l)NT} \left[\frac{1}{(2s_g + s_l)N} + 1 + \frac{(s_g + s_l)N}{\sqrt{T}} + 2\right].
\end{align}
We have $\frac{1}{(2s_g + s_l)N} + \frac{(s_g + s_l)N}{\sqrt{T}} \leq 1$ when $T$ is large enough. Therefore we get
\begin{equation*}
    T \cdot R[W] \leq 4ML\sqrt{(2s_g + s_l)NT},
\end{equation*}
or
\begin{equation}
    R[W] \leq 4ML\sqrt{\frac{(2s_g + s_l)N}{T}}
\end{equation}
\end{proof}
　

\end{document}